\newtheorem{theorem}{Theorem}[section]
\newtheorem{proposition}[theorem]{Proposition}
\newtheorem{lemma}[theorem]{Lemma}
\newtheorem{remark}[theorem]{Remark}
\numberwithin{equation}{section}
\DeclareMathOperator{\tr}{tr}
\newcommand{\Es}{\mathbb{E}}
\newcommand{\eps}{\epsilon}
\newcommand{\dd}{\mathrm{d}}
\newcommand{\h}{\mathcal{H}}
\newcommand{\R}{\mathbb{R}}
\newcommand{\E}{\mathcal{E}}
\newcommand{\G}{\mathcal{G}}
\newcommand{\z}{\mathbb{Z}}
\newcommand{\N}{\mathbb{N}}
\newcommand{\M}{\mathcal{M}}
\newcommand{\p}{\mathcal{P}}
\newcommand{\lp}{\mathcal{L}}
\newcommand{\cc}{\mathcal{C}}
\newcommand{\la}{\langle}
\newcommand{\ra}{\rangle}
\newcommand{\e}{\mathrm{e}}
\newcommand{\abs}[1]{\left\lvert #1 \right\rvert}
\newcommand{\norm}[1]{\left\lVert #1 \right\rVert}
\newcommand{\normsch}[1]{{\left\lVert #1 \right\rVert}_2}
\newcommand{\scal}[1]{\la #1 \ra}
\newcommand{\f}{\mathcal{X}}
\newcommand{\dirac}{\delta_}
\newcommand{\style}{\displaystyle}
\title[Spectral properties of dynamical localization]{Spectral properties of dynamical localization for Schr\"odinger Operators}
\author{Fran\c cois Germinet}
\address{ Universit\'e de Cergy-Pontoise, CNRS UMR 8088, IUF,
D\'epartement de Math\'ematiques,
F-95000 Cergy-Pontoise, France}
\email{francois.germinet@u-cergy.fr}
\author{Amal Taarabt}
\address{ Universit\'e de Cergy-Pontoise,CNRS UMR 8088,
D\'epartement de Math\'ematiques,
F-95000 Cergy-Pontoise, France}
 \email{amal.taarabt@u-cergy.fr}
\thanks{The authors are supported by the grant
  ANR-08-BLAN-0261-01.  The authors would also like to thank the Centre
  Interfacultaire Bernoulli (EPFL, Lausanne) for their hospitality.} 
\begin{document}
\begin{abstract}
We investigate the equivalence between dynamical localization and localization properties of eigenfunctions of Schr\"odinger Hamiltonians.  
We introduce three classes of equivalent properties and study the relationships between them. These relationships are shown to be optimal 
thanks to counter examples.
\end{abstract}

\maketitle

\section{Introduction}

In this note we investigate the equivalence between dynamical localization and localization properties of eigenfunctions for 
self-adjoint operators on a Hilbert spaces $\h$. Although the analysis holds in greater generality we are mainly interested in the 
particular cases where $\h=\mathrm{L}^2(\R^d)$ or $\ell^2(\z^d)$. We shall nevertheless extend the discussion to operators on more general
graphs. Our motivation comes from ergodic Schr\"odinger operators and more precisely from
random and quasi-periodic Schr\"odinger operators, where dynamical localization has been proved, that is the non spreading of wave-packets 
under the time evolution coming from the Schr\"odinger equation \cite{A,GDB,JL,G,DS,GK2,GJ,BJ, GK3}.

Although in the context of Anderson models, localization has
been interpreted as pure point spectrum with exponentially localized eigenfunctions,
it is by now well established that the latter is not sufficient to ensure dynamical localization, even with a uniform finite 
localization length, so that Del Rio, Jitomirskaya, Last and Simon raised the following natural question: what is localization ? \cite{DRJLS1,DRJLS2, GKT}. Actually even a single energy can be responsible for a nontrivial transport \cite{JSS,DLS}.

To go beyond Anderson localization, stronger forms of localization properties have been introduced in order to derive dynamical 
localization \cite{DRJLS1,DRJLS2,G,GK2}. Note that if an eigenfunction $\phi$ decays as 
$|\phi(x)|\leq C_\phi \ \e^{-\sigma|x-x_\phi|}$, then we can only conclude that
$|\phi(x)|\le \frac12$  if $|x-x_\phi|\geq
\frac1\sigma \log(2C_\phi)$, suggesting that not only the localization
length  $\frac 1 \sigma$ is of importance, but that  the constant $C_\phi$ also matters. This trivial observation, combined to the 
fact that when  dense point spectrum is observed a given non trivial wave-packet contains an infinite number of eigenfunctions, 
suggests that a better control on the exponential decay of eigenfunctions is required if we want to go beyond pure spectral results. 
In particular such a better control should ensure summability of the contributions of a (possible) infinite number of eigenfunctions. 
Two families of localization properties have been introduced in the literature which both solve this problem: \emph{semi-uniformly 
localized eigenfunctions} (SULE) where the constant $C_\phi$ is explicit in the center $x_\phi$ \cite{DRJLS1,DRJLS2} and 
\emph{semi-uniform decay of eigenfunction correlations} (SUDEC) where the two-sites eigenfunction correlation function is controlled 
in a summable way in energy \cite{G,GK2}. (SULE) and (SUDEC) properties are shown to be equivalent in great generality and to imply 
dynamical localization.

Besides the physical issue of controlling the time evolution of wave-packets, dynamical localization (DL) and the properties (SULE) 
and (SUDEC) have been shown to play a crucial role in the mathematical proof of several phenomenon of physical interest, like Mott 
formula \cite{KLM}, the quantum Hall effect \cite{H,Be} (including the quantification of the  Hall conductance \cite{BES}, the 
existence of plateaux due to localized states \cite{AG,GKS1,GKS2}, the validity of the Kubo formula \cite{BGKS}, the equality of 
the bulk / edge conductances when the Fermi level lies within a region of localization \cite{EGS,Ta}, the regularization of the 
edge conductance \cite{CG,EGS}). Also, these properties turn out to be a key ingredient in order to get relevant informations about
 the statistics of the eigenvalues of the Anderson model: finite multiplicity \cite{GK2,GK3}, simplicity of the spectrum \cite{KM}, 
Poisson statistics \cite{Mi,GKl1,GKl2}, asymptotic eigenvalues ergodicity \cite{Klo}, level spacings statistics \cite{GKl1,GKl2}.

In this article, we come back to these properties that have been established and used in the mathematical physics literature over
the past 20 years or so. We show that they can be gathered in three classes of equivalent properties and we study the relationships
between them. The first class corresponds to dynamical localization, the second one to (SULE) and (SUDEC) for a basis of eigenvectors, 
and the third one to a stronger form of (SULE) and (SUDEC) where these properties hold for all vectors in the range of the eigenprojector
(SULE+) and (SUDEC+).  This last strong form of localization actually implies finite multiplicity of the eigenvalues.
It was commonly believed that localization properties of eigenfunctions like (SULE) or (SUDEC) were stronger notions of localization than the non spreading of wave-packets (DL). However, the results available in \cite{DRJLS2,T} indicate that this naive picture is 
not quite right  and that detailed informations on the decay of eigenfunctions can be derived from the boundeness of moments of wave-packets.
In this note, we extend the results of \cite{DRJLS2} and \cite{T} and present a clean picture of the situation. In particular, it solves 
an open question raised in \cite{DRJLS2} about the equivalence between (DL) and (SULE), and the role payed by the multiplicity of the eigenvalues. We further extend the analysis to graphs or trees with moderate growth.

The paper is organized as follows. In section~\ref{results} we introduce the (DL), (SULE) and (SUDEC) properties and state our main 
results. In Section~\ref{proofs} we present the details of the proofs. In Section~\ref{counterex} we provide counter examples, 
showing that our results are optimal. Appendix A contains  the proof of technical lemma, and in Appendix B we extend the first 
result of Section~\ref{results} to the random case. 

\section{Main results}\label{results}
We consider a self-adjoint operator $H$ on the Hilbert space $\h=\mathrm{L}^2 (\R^d)$. The case $\ell(\z^d)$ is slightly simpler. 
At the end of this section we extend the results to graphs.

Given $x\in\R^d$, we set $|x|:= \max\{|x_1|,|x_2|,\dots,|x_d|\}$. We use $|X_u|$ to denote the operator given by the multiplication 
by the function $|x-u|$. By $\Lambda_L(x)$ we denote the open box centered at $x\in\z^d$ with length side $L>0$
and we write $\chi_{x,L}$ for its the characteristic function and set $\chi_x :=\chi_{x,1}$. 
Given an open interval $I \subset\R$, we consider $\mathcal{C}^{\infty} _{c,+}(I)$ is the class of nonnegative 
real valued functions infinitly differentiable with compact support 
contained in $I$. The notation $\normsch{A}$ corresponds to the Hilbert-Schmidt norm of the operator $A$.
We set $P_{E}:=\chi_{\{E\}}(H)$ the spectral projection associated to $E\in\R$.

For a given $\sigma>0$ and $\zeta\in (0,1]$, we introduce
\begin{align}\label{moment} 
M_u(\sigma,\zeta,\f,t):= &\normsch{\e^{{\frac{\sigma}{2}}\abs{X_u}^{\zeta}} \e^{-itH} \f(H) \chi_u}^2 \\
=&  \tr\{\chi_u \ \e^{itH} 
\f(H) \ \e^{\sigma\abs{X_u}^{\zeta}} \ \e^{-itH} \f(H) \chi_u\},
\end{align}
the $(\sigma,\zeta)$-subexponential moment at time $t$ for the time evolution, initially localized near $u \in\z^d$ and localized 
in energy by the smooth function $\f\in\mathcal{C}^{\infty} _{c,+}(I)$.

The following theorem generalizes the main result of  \cite{T}.

 \begin{theorem}\label{equiv1}
Let $I\subset\sigma(H)$ be an interval and assume that $H$ has pure point spectrum in $I$. The following properties are equivalent.
\begin{enumerate}
  \item[(i)] There exist $\sigma>0, \zeta\in(0,1]$ so that for any 
 $\eps>0$, $u\in\z^d$ and  $\f\in\mathcal{C}^{\infty} _{0,+}(I)$, there is a constant $C_{\sigma,\zeta,\eps,\f} <\infty$, 
so that
\begin{equation}\label{exp_DL}
\sup_T \M_u(\sigma,\zeta,\f,T)  :=\sup_T \frac{1}{T}\int_{0}^{T} M_u(\sigma,\zeta,\f,t) \dd t\leq C_{\sigma,\zeta,\eps,\f} \ 
\e^{\eps\abs{u}^{\zeta}}.
\end{equation}
 \item[(ii)] There exist $\sigma>0, \zeta\in(0,1]$ so that for any 
 $\eps>0$, $u\in\z^d$ and  $\f\in\mathcal{C}^{\infty} _{0,+}(I)$, there is a constant $C_{\sigma,\zeta,\eps,\f} <\infty$, 
such that
\begin{equation}
\sup_T \frac1T \int_0^\infty \mathrm{e}^{-t/T} M_u(\sigma,\zeta,\f,t) \dd t \leq C_{\sigma,\zeta,\eps,\f} \ 
\e^{\eps\abs{u}^{\zeta}}.
\end{equation} 
\item[(iii)] There exist $\sigma>0, \zeta\in(0,1]$ so that for any 
 $\eps>0$, $u\in\z^d$ and  $\f\in\mathcal{C}^{\infty} _{0,+}(I)$, there is a constant $C_{\sigma,\zeta,\eps,\f} <\infty$, 
such that
\begin{equation}\label{exp_DL0}
 \sup_t M_u(\sigma,\zeta,\f,t)\leq C_{\sigma,\zeta,\eps,\f} \ \e^{\eps\abs{u}^{\zeta}}.
 \end{equation}
  
\item[(iv)] There exist $\zeta\in (0,1], \sigma>0$ such that for 
all $\eps>0$ and for any $\f\in\mathcal{C}^{\infty} _{0,+}(I)$, there is a constant $C_{\zeta,\sigma,\eps,\f}<\infty$, so that
\begin{equation}\label{SUDL} 
\sup_t \normsch{\chi_x \ \e^{-itH}\f(H) \chi_u}\leq C_{\zeta,\sigma,\eps,\f} \ 
\e^{\eps\abs{u}^{\zeta}}\ \e^{-\sigma\abs{x-u}^{\zeta}} \ \ for \ all \ x,u \in\z^d.
\end{equation}

\item[(v)] There exist $\zeta\in (0,1], \sigma>0$ such that for all 
$\eps>0$ and for any $\f\in\mathcal{C}^{\infty} _{0,+}(I)$, there is a constant $C_{\zeta,\sigma,\eps,\f}<\infty$, so that 
\begin{equation}\label{SULP}
\sup_E \f(E)\normsch{\chi_x P_{E} \chi_u}\leq C_{\zeta,\sigma,\eps,\f} \ \e^{\eps\abs{u}^{\zeta}}
 \  \e^{-\sigma\abs{x-u} ^{\zeta}} \ \ for \ all \ x,u \in\z^d. 
 \end{equation} 
\end{enumerate}
If $H$ satisfies one of these properties, we say that $H$ exhibits  (subexponential) dynamical localization in $I$. When $\zeta=1$ 
we may talk about exponential dynamical localization.

 \end{theorem}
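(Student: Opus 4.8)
The plan is to prove the chain of implications
$(i)\Leftarrow(ii)\Leftarrow(iii)\Rightarrow(iv)\Rightarrow(v)\Rightarrow(iii)$ together with
$(i)\Rightarrow(iii)$, closing the loop. Several of these are elementary: since $M_u(\sigma,\zeta,\f,t)\ge 0$, both time averages appearing in $(i)$ and $(ii)$ are dominated by $\sup_t M_u(\sigma,\zeta,\f,t)$, so $(iii)\Rightarrow(i)$ and $(iii)\Rightarrow(ii)$ are immediate. Likewise $(iii)$ and $(iv)$ are essentially the same statement read through $\normsch{\chi_x\,\e^{-itH}\f(H)\chi_u}^2\le M_u(\sigma,\zeta,\f,t)\,\e^{-\sigma\abs{x-u}^\zeta}$ after inserting the resolution of the identity $\sum_x\chi_x=1$; one direction uses $\e^{\sigma\abs{X_u}^\zeta}\chi_x = \e^{\sigma\abs{x-u}^\zeta}\chi_x$ up to bounded error from the cube size, the other sums the geometric-type series $\sum_x\e^{\eps\abs{u}^\zeta}\e^{-\sigma\abs{x-u}^\zeta}$ in $\R^d$, which converges and only changes $\sigma$ slightly. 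So the genuine content is $(iv)\Leftrightarrow(v)$ and the passage from a time-averaged bound $(i)$ or $(ii)$ back to the pointwise-in-time bound $(iii)$.

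For $(v)\Rightarrow(iii)$ I would use the pure point hypothesis on $I$: writing $\f(H)=\sum_E \f(E)P_E$ (sum over eigenvalues in $I$, convergent strongly), and using that $\e^{-itH}$ acts on each $P_E\h$ as the scalar $\e^{-itE}$, one expands
$\chi_x\,\e^{-itH}\f(H)\chi_u = \sum_E \e^{-itE}\f(E)\chi_x P_E\chi_u$.
Taking Hilbert--Schmidt norms, using orthogonality of the ranges of the $P_E$ for the cross terms in the $\sigma$-weighted trace, and bounding $\f(E)\normsch{\chi_x P_E\chi_u}$ via $(v)$ — crucially, with a slightly larger $\zeta$-exponent constant so that $\sum_E$ over eigenvalues in a bounded interval is summable (this is exactly the ``summability in energy'' point emphasized in the introduction, and is where one spends a bit of the $\sigma$ or picks up a factor depending on $\f$ and the density of eigenvalues) — one recovers $M_u(\sigma',\zeta,\f,t)\le C\,\e^{\eps\abs{u}^\zeta}$ uniformly in $t$. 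The direction $(iv)\Rightarrow(v)$ is the standard RAGE/averaging argument: the spectral projection onto a single eigenvalue $E$ is recovered as the Cesàro (or Abel) limit $P_E = \lim_{T\to\infty}\frac1T\int_0^T \e^{it(H-E)}\,\dd t$ in the strong sense, hence $\chi_x P_E\chi_u = \lim_T \frac1T\int_0^T \e^{-itE}\chi_x\e^{-itH}\chi_u\f(H)\,\dd t$ up to harmless insertion of $\f$, and by Fatou for the Hilbert--Schmidt norm together with the uniform-in-$t$ bound in $(iv)$ one gets $(v)$ with the same constants.

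The main obstacle — and the reason $(i)$ or $(ii)$ are listed separately rather than trivially — is the reverse passage, obtaining the pointwise-in-time control $(iii)$ from a mere Cesàro or Abel time average $(i)$, $(ii)$. A priori a bounded time average allows $M_u$ to have large spikes on a set of times of small density. The key mechanism, which is the heart of the theorem generalizing \cite{T} and \cite{DRJLS2}, is that under the pure point hypothesis $M_u(\sigma,\zeta,\f,t)$ is an almost periodic function of $t$: expanding in the eigenbasis as above, $M_u(\sigma,\zeta,\f,t)=\sum_{E,E'}\e^{it(E-E')}\f(E)\f(E')\tr\{\chi_u P_E \e^{\sigma\abs{X_u}^\zeta} P_{E'}\chi_u\}$ is a (generalized, Besicovitch) trigonometric series with nonnegative ``diagonal'' contribution; for such functions the Cesàro mean controls the supremum up to the off-diagonal terms, and one argues that if $(i)$ holds then the diagonal sum $\sum_E \f(E)^2\tr\{\chi_u P_E\e^{\sigma\abs{X_u}^\zeta}P_E\chi_u\}$ — which is exactly a Parseval/Wiener-type average of $M_u$ — is bounded by $C\e^{\eps\abs{u}^\zeta}$, and then the full sup is bounded by Cauchy--Schwarz on the double sum after extracting the diagonal, at the cost of replacing $\sigma$ by $\sigma/2$ (this is why the weight is written $\e^{(\sigma/2)\abs{X_u}^\zeta}$ inside the Hilbert--Schmidt norm in \eqref{moment}). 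Concretely: $\normsch{\e^{\frac\sigma2\abs{X_u}^\zeta}\e^{-itH}\f(H)\chi_u}^2 = \sum_E \f(E)^2 \normsch{\e^{\frac\sigma2\abs{X_u}^\zeta}P_E\chi_u}^2$ by orthogonality, with no $t$-dependence at all once one works in the eigenbasis — so in fact $M_u$ does not depend on $t$! That observation collapses $(i)$, $(ii)$, $(iii)$ immediately: under pure point spectrum, $\e^{-itH}$ commutes with each $P_E$ up to a phase of modulus one, the phases cancel in the Hilbert--Schmidt norm squared, and all three quantities equal $\sum_E \f(E)^2\normsch{\e^{\frac\sigma2\abs{X_u}^\zeta}P_E\chi_u}^2$. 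I would therefore open the proof with this computation, note it trivializes $(i)\Leftrightarrow(ii)\Leftrightarrow(iii)$, and then spend the real work on $(iii)\Leftrightarrow(iv)\Leftrightarrow(v)$ along the lines above, paying attention only to the bookkeeping of how $\sigma$, $\zeta$ and the $\f$-dependent constants degrade at each step and to the summability over eigenvalues in the passage to $(iii)$.
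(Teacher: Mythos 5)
Your central shortcut is incorrect, and it is exactly the shortcut the theorem is designed to rule out. You claim that
\[
\normsch{\e^{\frac\sigma2\abs{X_u}^\zeta}\e^{-itH}\f(H)\chi_u}^2
=\sum_E \f(E)^2 \normsch{\e^{\frac\sigma2\abs{X_u}^\zeta}P_E\chi_u}^2
\]
``by orthogonality,'' so that $M_u$ is $t$-independent and $(i)\Leftrightarrow(ii)\Leftrightarrow(iii)$ collapses. But the Hilbert--Schmidt inner product of the summands is $\tr\{\chi_u P_E\, \e^{\sigma\abs{X_u}^\zeta}\, P_{E'}\chi_u\}$, and this does \emph{not} vanish for $E\neq E'$: the weight $\e^{\sigma\abs{X_u}^\zeta}$ is sandwiched between the projections and does not commute with $H$. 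The cross terms are precisely what carry the $t$-dependence, and they only vanish after Cesàro (or Abel) time-averaging, which is why the paper computes $\lim_T\frac1T\int_0^T\e^{-it(E_k-E_{k'})}\dd t$ in the proof of Theorem~\ref{proj}. In fact the paper explicitly flags the equivalence of dynamical localization and time-averaged dynamical localization as genuine content (Remark after Theorem~\ref{equiv1}). Under your reading this remark would be vacuous. So your proposed short loop $(i)\Rightarrow(iii)$ fails; the paper closes the loop via $(i)\Rightarrow(v)\Rightarrow(iv)\Rightarrow(iii)$.

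The second gap is in your $(v)\Rightarrow(iii)$ (equivalently $(v)\Rightarrow(iv)$), which is the real heart of the theorem. You propose to ``bound $\f(E)\normsch{\chi_x P_E\chi_u}$ via $(v)$, with a slightly larger $\zeta$-exponent so the sum over eigenvalues is summable.'' Adjusting $\zeta$ does not buy summability over the index set of eigenvalues in a bounded interval, because $(v)$ gives a bound that is \emph{uniform} in $E$ but not summable in $E$, and nothing in the hypothesis controls how many eigenvalues lie in $\supp\f$. The paper handles this through the quantity $A_k(\sigma,\zeta,u)$ of Lemma~\ref{order} and a counting/H\"older argument (Theorem~\ref{kernel}, estimate \eqref{geom cnd}): the spatial weight $\e^{\sigma\abs{x-u}^\zeta}$ is partly ``spent'' to produce, via Lemma~\ref{order}, a polylogarithmic bound on the number of $k$ with $A_k\leq L$, which is exactly what furnishes the missing summability in $k$. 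This counting lemma is also the only place where the geometry of $\z^d$ enters, which is why the paper can then isolate the hypothesis \eqref{volcnd} in Theorem~\ref{geomthm1}. Your sketch omits this mechanism entirely.

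Your treatment of the elementary implications ($(iii)\Rightarrow(ii)\Rightarrow(i)$ by domination, $(iii)\Leftrightarrow(iv)$ by unfolding the Hilbert--Schmidt norm as a sum over $x$) matches the paper, and your $(iv)\Rightarrow(v)$ via Ces\`aro averaging of $\e^{it(H-E)}$ is the right idea (the paper states it as $(i)\Rightarrow(v)$, Theorem~\ref{proj}, and in fact only uses a $\liminf_T$, so it works directly from the weaker hypothesis $(i)$). But without a correct replacement for the false $t$-independence claim and for the eigenvalue-summability step, the proposal does not prove the theorem.
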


Clearly properties $(iii)$ and $(iv)$ are equivalent, and $(iii)\Longrightarrow (ii) \Longrightarrow (i)$. It will remain to show
 that $(i)\Longrightarrow (v) \Longrightarrow (iv)$, which is the heart of  Theorem~\ref{equiv1}. We point out that the underlying  
geometry of the Hilbert space only plays a role in the proof of $(v) \Longrightarrow (iv)$.

\begin{remark} 
(i) Properties $(iv)$ and $(v)$ of Theorem~\ref{equiv1} have been introduced in \cite{DRJLS2}, and are respectively called by 
Semi-Uniform Dynamical Localization (SUDL)  and Semi-Uniform Localized Projections (SULP).
\\
(ii) Theorem~\ref{equiv1}  actually shows that dynamical localization and time averaged dynamical localization are equivalent. 
\\
(iii) Theorem~\ref{equiv1}  generalizes the result of \cite{T} in the sense that it provides the decay of the kernel of $\e^{-itH}$.
\\
(iv) By the RAGE theorem, the bound \eqref{exp_DL} implies that the spectrum of $H$ is pure point in $I$. If the multiplicity is 
finite, Theorem~\ref{connexion} will show that much more holds true.
\\
(v) If one considers polynomial moments in \eqref{moment} rather than (sub)exponential ones, then Theorem~\ref{equiv1}  still holds 
with polynomial decay in (iv) and (v).
\end{remark}

We turn to the description of the decay properties of the eigenfunctions of $H$ and we start with some notations. 

Let $\E\subset I$ be a collection of eigenvalues of $H$ that we assume to be nonempty ($\E$ may be infinite). 
Set $P_{\E}= \sum_{E\in\E} P_E$ and write $\h_E=P_E\h$ and $\h_{\E}=P_{\E}\h$.  We fix $\kappa > \frac d2$, and define $T$ as the  
operator on $\h$ given by multiplication by 
the function  
$T(x)=\scal{x}^{\kappa}$ for $x\in\R^d$, with $\scal{x}:=\sqrt{1+|x|^2}$. We set
\begin{equation}\label{alhpa E}
\alpha_E: =\tr \{T^{-1}P_E T^{-1}\}= \|T^{-1}P_E\|_2^2\leq \tr P_{E} .
\end{equation}
Given a unit vector  $\phi\in\h$, we denote by $P_\phi$ the rank one projection $P_\phi=|\phi\ra\la\phi|$, and let
\begin{equation}\label{alpha phi}
\alpha_\phi: =\tr \{T^{-1}P_\phi \ T^{-1}\}= \|T^{-1}P_\phi\|_2^2 = \lVert T^{-1}\phi  \rVert^2 \le 1.
\end{equation}
If $\{\phi_n\}_{n=1}^{N_E}$ is an orthonormal basis of $P_E\h$, with
$N_E = \tr P_{E} \leq \infty$,  then $\sum_{n=1}^{N_E} \alpha_{\phi_n}=\alpha_E$.
We assume the following finiteness condition
\begin{equation}\label{sumalphaH}
\alpha_{H,\E} := \sum_{E \in \E} \alpha_E =\tr \{T^{-1}P_{\E} T^{-1}\} <\infty.
\end{equation}
If $\overline{\E}$ is compact,  condition  \eqref{sumalphaH} is known to hold for a large variety of Schr\"odinger and generalized 
Schr\"odinger operators \cite{KKS,GK2}.

\begin{theorem}\label{thmequiv}
 Let $\G_\E=\{\phi_n\}_n$ be an orthonormal basis of $\h_{\E}$, and assume \eqref{sumalphaH}. Then the following properties are equivalent:
\\
(i)  Summable Uniform Decay of Eigenfunction Correlations on $ \G_{\E}$ (SUDEC): there exist $\sigma>0, \zeta\in (0,1]$ 
such that for all $\eps>0$ and all $\phi_n\in\G_{\E}$ and $x,u \in\z^d$, we have
\begin{equation}\label{sudecH}
\normsch{\chi_x P_{\phi_n} \chi_u} =
\norm{\chi_x \phi_n} \norm{\chi_u \phi_n} \leq C_{\zeta,\sigma,\eps} \ \alpha_{\phi_n} \ \e^{\eps \abs{u}^{\zeta}} 
\e^{-\sigma\abs{x-u}^{\zeta}} .
\end{equation}
(i') There exist $\sigma>0, \zeta\in (0,1]$ such that for all $\eps>0$ and all $\phi_n\in\G_{\E}$ and $x,u \in\z^d$,
\begin{equation}\label{sudecH alpha=1}
\norm{\chi_x \phi_n} \norm{\chi_u \phi_n} \leq C_{\zeta,\sigma,\eps} \ \e^{\eps \abs{u}^{\zeta}} 
\e^{-\sigma\abs{x-u}^{\zeta}}.
\end{equation}
(ii) Semi Uniformly Localized Eigenfunctions on $ \G_{\E}$ (SULE): there exist $\sigma>0,
\zeta\in (0,1]$ 
such that for each $\phi_n\in  \G_{\E}$, we can find $x_{\phi_n}\in\z^d$ so that for all $\eps>0$ and $x \in\z^d$, we have
\begin{equation}\label{suleH}
 \norm{\chi_x \phi_n}\leq C_{\zeta,\sigma,\eps} \ \e^{\eps \abs{x_{\phi_n}}^{\zeta}} \e^{-\sigma\abs{x-x_{\phi_n}}^{\zeta}} .
\end{equation}
Moreover, if (ii) holds, we may order the centers of localization $x_{\phi_n}$ in such a way that $| x_{\phi_n} | \ge C n^{1/2\kappa}$.
\end{theorem}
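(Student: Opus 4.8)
The plan is to establish the cycle of implications $(i)\Rightarrow(i')\Rightarrow(ii)\Rightarrow(i)$ and then to read off the ordering statement from an estimate obtained along the way; the implication $(i)\Rightarrow(i')$ is immediate because $\alpha_{\phi_n}\le 1$. Throughout I will use the following elementary facts: every polynomial in $\abs{x}$ is bounded by $C_{\delta}\,\e^{\delta\abs{x}^{\zeta}}$ for each $\delta>0$; one has $\sum_{w\in\z^{d},\,\abs{w}>\rho}\e^{-2\sigma\abs{w}^{\zeta}}\le C_{d,\zeta,\sigma}\,\rho^{d-\zeta}\,\e^{-2\sigma\rho^{\zeta}}$ for $\rho\ge 1$; the quantity $\alpha_{\phi_n}=\norm{T^{-1}\phi_n}^{2}$ is comparable to $\sum_{v\in\z^{d}}\scal{v}^{-2\kappa}\norm{\chi_v\phi_n}^{2}$ with constants depending only on $d$ and $\kappa$; and the subadditivity $\abs{a+b}^{\zeta}\le\abs{a}^{\zeta}+\abs{b}^{\zeta}$.

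For $(ii)\Rightarrow(i)$, fix $\eps>0$ and start by bounding $\alpha_{\phi_n}$ from below. By \eqref{suleH} and $\sum_v\norm{\chi_v\phi_n}^{2}=1$, the tail estimate above produces a radius $\rho_n\le c_{\eps}\scal{x_{\phi_n}}$ with $\int_{\abs{x-x_{\phi_n}}\le\rho_n}\abs{\phi_n(x)}^{2}\,\dd x\ge\tfrac12$; since $\scal{x}^{-2\kappa}\ge c'_{\eps}\scal{x_{\phi_n}}^{-2\kappa}$ on that set, integrating gives $\alpha_{\phi_n}\ge\tfrac12 c'_{\eps}\scal{x_{\phi_n}}^{-2\kappa}$. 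Now multiply the two bounds \eqref{suleH} for $\norm{\chi_x\phi_n}$ and $\norm{\chi_u\phi_n}$, divide by $\alpha_{\phi_n}$ and use $\alpha_{\phi_n}^{-1}\le C_{\eps}\scal{x_{\phi_n}}^{2\kappa}\le C_{\eps,\delta}\,\e^{\delta\abs{x_{\phi_n}}^{\zeta}}$. With $\sigma'=\sigma/2$, subadditivity yields $\e^{-\sigma\abs{x-x_{\phi_n}}^{\zeta}}\e^{-\sigma\abs{u-x_{\phi_n}}^{\zeta}}\le\e^{-\sigma'\abs{x-u}^{\zeta}}\e^{-\sigma'\abs{u-x_{\phi_n}}^{\zeta}}$, and the surviving factor $\e^{(2\eps+\delta)\abs{x_{\phi_n}}^{\zeta}}\le\e^{(2\eps+\delta)\abs{u}^{\zeta}}\e^{(2\eps+\delta)\abs{u-x_{\phi_n}}^{\zeta}}$ is then absorbed by $\e^{-\sigma'\abs{u-x_{\phi_n}}^{\zeta}}$ as soon as $2\eps+\delta\le\sigma'$. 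Choosing $\eps$ and $\delta$ small in terms of the prescribed $\eps$ of $(i)$ produces \eqref{sudecH} with the same $\zeta$ and with $\sigma$ halved.

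For $(i')\Rightarrow(ii)$, let $x_{\phi_n}$ be a point at which $v\mapsto\norm{\chi_v\phi_n}$ attains its maximum $m_n$ (attained since the sum of squares converges), and set $u=x_{\phi_n}$ in \eqref{sudecH alpha=1}, giving $\norm{\chi_x\phi_n}\le(C_{\eps}/m_n)\,\e^{\eps\abs{x_{\phi_n}}^{\zeta}}\e^{-\sigma\abs{x-x_{\phi_n}}^{\zeta}}$; it remains to bound $m_n$ from below. Feeding $\norm{\chi_v\phi_n}\le m_n$ into this estimate, the radius $\rho_n$ beyond which the decaying bound is the smaller of the two satisfies $\sigma\rho_n^{\zeta}=\log(1/m_n^{2})+\eps\abs{x_{\phi_n}}^{\zeta}+O(1)$, whence $1=\sum_v\norm{\chi_v\phi_n}^{2}\le C\,(\rho_n+1)^{d}m_n^{2}$; substituting the value of $\rho_n$ turns this into $\e^{L}\le C\,(1+L+\abs{x_{\phi_n}}^{\zeta})^{d/\zeta}$ with $L=\log(1/m_n^{2})\ge0$, and since the left-hand side grows exponentially in $L$ while the right-hand side only polynomially, $L$ is forced to satisfy $L\le d\log\scal{x_{\phi_n}}+C$, i.e. $m_n\ge c_{\eps}\scal{x_{\phi_n}}^{-d/2}$. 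Inserting this polynomial lower bound and absorbing $\scal{x_{\phi_n}}^{d/2}$ into $\e^{\delta\abs{x_{\phi_n}}^{\zeta}}$ gives \eqref{suleH}; composing with $(ii)\Rightarrow(i)$ then yields $(i')\Rightarrow(i)$ and closes the cycle. I expect this lower bound on $m_n$ to be the only genuinely non-routine step: the key point is that $m_n$ need \emph{not} be bounded below by a constant, since a negative power of $\scal{x_{\phi_n}}$ already suffices for \eqref{suleH} and \eqref{sudecH}, which tolerate subexponential prefactors; the rest is bookkeeping of exponents.

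Finally, for the ordering, fix one value of $\eps$ in the argument above, so that $(ii)\Rightarrow(i)$ furnishes $\alpha_{\phi_n}\ge c\,\scal{x_{\phi_n}}^{-2\kappa}$. Using \eqref{sumalphaH} we relabel $\G_{\E}$ so that $n\mapsto\alpha_{\phi_n}$ is nonincreasing; then $n\,\alpha_{\phi_n}\le\sum_{m\le n}\alpha_{\phi_m}\le\alpha_{H,\E}$, so $\scal{x_{\phi_n}}^{2\kappa}\ge c\,\alpha_{\phi_n}^{-1}\ge(c/\alpha_{H,\E})\,n$, which yields $\abs{x_{\phi_n}}\ge C\,n^{1/2\kappa}$ after a harmless adjustment of the constant.
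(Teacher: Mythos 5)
Your proof is correct, and it reaches the same destination as the paper but via a somewhat different intermediate step for the implication from (i') to (ii). The paper passes through Lemma~\ref{lemf}: it works with the normalized vector $\tilde\phi=\phi/\lVert T^{-1}\phi\rVert$, evaluates $1=\lVert T^{-1}\tilde\phi\rVert^2\le C_d\,\lVert\chi_{x_\phi}\tilde\phi\rVert^2$ using $\sum_u\lVert\chi_u T^{-1}\rVert^2\le C_d$, and so obtains the clean lower bound $m_n=\max_x\lVert\chi_x\phi_n\rVert\ge C_d^{-1/2}\sqrt{\alpha_{\phi_n}}$ in a single line; this gives $(\mathrm{SULE}_{1/\sqrt{s}})$ directly, and Lemma~\ref{lemalphax} then converts $\alpha_{\phi_n}^{-1/2}$ into a polynomial in $\scal{x_{\phi_n}}$. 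You instead bootstrap the SUDEC bound against the \emph{unweighted} normalization $\sum_v\lVert\chi_v\phi_n\rVert^2=1$, balancing the cap $C(\rho_n+1)^d m_n^2$ against the exponential tail to show $L=\log(1/m_n^2)$ is forced to be logarithmic in $\scal{x_{\phi_n}}$, hence $m_n\ge c\scal{x_{\phi_n}}^{-d/2}$; this is more elementary (no weighted norm or explicit reference to $\alpha_\phi$ in this step) but somewhat longer, and in fact yields a slightly stronger polynomial bound since $\kappa>d/2$. Your $(\mathrm{ii})\Rightarrow(\mathrm{i})$ step, by contrast, is essentially the same as the paper's Lemma~\ref{lemalphax} combined with the multiplication-and-subadditivity argument inside Theorem~\ref{techthme}. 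For the ordering of the centers, you sort by $\alpha_{\phi_n}$ nonincreasing and use $n\alpha_{\phi_n}\le\alpha_{H,\E}$; the paper instead counts $N_L=\#\{n:|x_{\phi_n}|\le L\}\le CL^{2\kappa}\alpha_\E$ and sorts by $|x_{\phi_n}|$. Both give $|x_{\phi_n}|\ge Cn^{1/2\kappa}$. In short: a correct proof, with the genuine novelty being the direct bootstrap lower bound on $m_n$ in place of the $T^{-1}$-normalization trick of Lemma~\ref{lemf}.
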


The (SULE) property has been introduced in \cite{DRJLS2}, while the (SUDEC) property has been introduced in \cite{G} and further developed 
in \cite{GK2}. We single out $(i')$ for it may look more natural to the reader. However, while $(i)$ is shown to imply quite readily dynamical 
localization, using $(i')$ would require a more involved analysis.

\begin{remark}\label{sing case}
(i) Notice that if \eqref{sudecH} and \eqref{suleH} are respectively replaced by
\begin{equation}\label{sudec_zeta'}
\norm{\chi_x \phi_n} \norm{\chi_u \phi_n} \leq C_{\zeta',\zeta,\sigma,\eps} \ \alpha_{\phi_n} \ \e^{\eps \abs{u}^{\zeta'}} 
\e^{-\sigma\abs{x-u}^{\zeta}} ,
\end{equation}
and
\begin{equation}\label{sule_zeta'}
 \norm{\chi_x \phi_n}\leq C_{\zeta',\zeta,\sigma,\eps} \ \e^{\eps \abs{x_{\phi_n}}^{\zeta'}} \e^{-\sigma\abs{x-x_{\phi_n}}^{\zeta}} ,
\end{equation}
with $\zeta'>\zeta$ then the equivalence is lost. However (SUDEC), that is \eqref{sudec_zeta'}, is still strong enough to imply dynamical 
localization. This is not the case for  \eqref{sule_zeta'}, because of the lack of the quantity $\alpha_{\phi_n}$. This situation is not as 
exotic as one may think! This is exactly what happens for random Schr\"odinger operators with singular measure (including Bernoulli), 
see \cite{GK3}.
\end{remark}

Until now, the multiplicity of eigenvalues may be arbitrary. Now, we introduce a third class of properties wich corresponds to stronger 
version of (SUDEC) and (SULE) and that will forces multiplicity to be finite.
Our motivation comes from the theory of Anderson localization where eigenfunctions are shown to exhibit stronger localization properties
than (SULE) or (SUDEC). We describe them in the following theorem.

\begin{theorem}\label{thmfinrank}
Assume \eqref{sumalphaH}. Let $E\in\E$ be given ant let $\G_E=\{\phi_n\}_n$ be an orthonormal basis of $\h_{E}$. 
Then the following properties are equivalent:\\
 (i) There exist $\sigma>0, \zeta\in (0,1]$ such that for any $\eps>0$, for all $\phi_n, \phi_m \in\G_E$ 
 and for all $x,u\in\z^d$,
   \begin{equation}\label{sudec+}
\norm{\chi_x \phi_n} \norm{\chi_u \phi_m} \leq C_{\zeta,\sigma,\eps} \sqrt{\alpha_{\phi_n}\alpha_{\phi_m}} \ 
\e^{\eps \abs{u}^{\zeta}} \e^{-\sigma\abs{x-u}^{\zeta}} .
\end{equation}
 (ii) There exist $\sigma>0, \zeta\in (0,1]$ such that for any $\eps>0$, for all $\phi\in\mathrm{Span}\ \G_E$
 and for all $x,u\in\z^d$,
 \begin{equation}
\norm{\chi_x \phi} \norm{\chi_u \phi} \leq C_{\zeta,\sigma,\eps} \ \alpha_{\phi} \ 
\e^{\eps \abs{u}^{\zeta}} \e^{-\sigma\abs{x-u}^{\zeta}} .
\end{equation}
(iii) There exist $\sigma>0, \zeta\in (0,1]$ such that for any $\eps>0$, for all $\phi,\psi\in\mathrm{Span}\ \G_E$ and for all $x,u\in\z^d$,
 \begin{equation}\label{sudec^+}
\norm{\chi_x \phi} \norm{\chi_u \psi} \leq C_{\zeta,\sigma,\eps} \sqrt{\alpha_{\phi} \alpha_{\psi}} \ 
\e^{\eps \abs{u}^{\zeta}} \e^{-\sigma\abs{x-u}^{\zeta}} .
\end{equation}
 (iv) There exist $\sigma>0, \zeta\in (0,1]$ such that for any $\eps>0$, for all $x,u \in\z^d$,
\begin{equation}\label{sudecP}
\normsch{\chi_x P_{E}} \normsch{\chi_u P_{E}} \leq C_{\zeta,\sigma,\eps} \ \alpha_E \ \e^{\eps \abs{u}^{\zeta}} 
\e^{-\sigma\abs{x-u}^{\zeta}} .
\end{equation}
(iv') property (i), (ii), (iii) or (iv) holds with $\alpha_\bullet=1$, $\bullet=\phi, \psi$ normalized vectors or $\bullet=E$, 
as in \eqref{sudecH alpha=1}.
 \\
 (v) There is a common center of localization $x_E$ for all $\phi_n\in \G_\E$
 such that there are $\sigma>0, \zeta\in (0,1]$ so that for any $\eps>0$, for all $x \in \z^d$,
\begin{equation} \label{sule+}
\norm{\chi_{x} \phi_n} \leq C_{\zeta,\sigma,\eps} \sqrt{\alpha_{\phi_n} }\ \e^{\eps|x_E|^\zeta} \e^{-\sigma|x-x_E|^\zeta} .
\end{equation}
(vi) There is a common center of localization $x_E$ for all $\phi\in\mathrm{Span}\ \G_E$ such that 
there are $\sigma>0, \zeta\in (0,1]$ so that for any $\eps>0$, for all $x \in \z^d$,
\begin{equation} \label{sule^+}
\norm{\chi_{x} \phi} \leq C_{\zeta,\sigma,\eps}  \sqrt{\alpha_{E}} \ \e^{\eps|x_E|^\zeta} \e^{-\sigma|x-x_E|^\zeta} .
\end{equation}
(vii) There exists $x_E \in \z^d$ such that there exist $\sigma>0, \zeta\in (0,1]$ so that for any 
$\eps>0$, for all $x \in \z^d$, we have
\begin{equation} \label{suleP}
\normsch{\chi_{x} P_E} \leq C_{\zeta,\sigma,\eps}\sqrt{\alpha_{E}} \ \e^{\eps|x_E|^\zeta} \e^{-\sigma|x-x_E|^\zeta} . 
\end{equation}
We denote by (SUDEC+) any of properties $(i)$, $(ii)$, $(iii)$, $(iv)$, and by (SULE+) any of properties $(v)$, $(vi)$, $(vii)$.
\\
(vii') property (v), (vi) or (vii)  holds with $\alpha_\bullet=1$, $\bullet=\phi, \psi$ normalized vectors or $\bullet=E$.
\\
If one of the above properties holds then the eigenvalues have finite multiplicity and in addition,
\begin{equation}\label{bndtr}
\tr P_{E} \leq C_{\zeta,\sigma} \ \alpha_E  \scal{x_E}^{2\kappa},
\end{equation}
and
\begin{equation} \label{NLH'}
\tilde{N}_L := \# \{E \in\E; |x_E| \leq L \} \leq C_{\zeta,\sigma} \ \alpha_{H,\E}  L^{2\kappa} \quad \text{for all}  \quad   
L \geq 1 ,
\end{equation}
where $x_E$ is as in $(vii)$.
\end{theorem}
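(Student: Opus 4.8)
The plan is to split the nine assertions into the two families (SUDEC+)$=\{(i),(ii),(iii),(iv)\}$ and (SULE+)$=\{(v),(vi),(vii)\}$, to show each is internally equivalent, then to establish $(iv)\Leftrightarrow(vii)$, to read off from $(vii)$ the finiteness of the multiplicity together with \eqref{bndtr} and \eqref{NLH'}, and finally to fold in the primed variants. Two elementary devices recur. First, subadditivity of $t\mapsto t^{\zeta}$ gives $|x-u|^{\zeta}\le|x-x_E|^{\zeta}+|x_E-u|^{\zeta}$ and $|x_E|^{\zeta}\le|u|^{\zeta}+|x_E-u|^{\zeta}$; together with the standard remark that a subexponential estimate valid for all small $\eps>0$ with a decay rate bounded below holds for all $\eps>0$, this lets one pass freely between ``$x_E$-centred'' bounds $\e^{\eps|x_E|^{\zeta}}\e^{-\sigma|x-x_E|^{\zeta}}$ and ``correlation'' bounds $\e^{\eps|u|^{\zeta}}\e^{-\sigma|x-u|^{\zeta}}$, at the price of an arbitrarily small loss in $\sigma$. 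Second, since $\langle x_E\rangle^{p}\le C_{p,\eps}\,\e^{\eps|x_E|^{\zeta}}$ for every $p$ and $\eps>0$, any power of $\langle x_E\rangle$ is absorbed into the free factor $\e^{\eps|x_E|^{\zeta}}$ after relabelling; hence ratios of the weights $\alpha_E,\alpha_\phi,\alpha_{\phi_n}$ will be harmless as soon as they are shown to be of polynomial size in $\langle x_E\rangle$.

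The engine turning correlation bounds into centred ones is the elementary identity, valid for any bounded positive $P$,
\[
\alpha_P=\|T^{-1}P\|_2^2=\sum_{u\in\z^d}\|\chi_uT^{-1}P\|_2^2\asymp\sum_{u\in\z^d}\langle u\rangle^{-2\kappa}\,\|\chi_uP\|_2^2 ,
\]
the comparison using $\langle x\rangle\asymp\langle u\rangle$ on $\supp\chi_u$. As $\kappa>d/2$ the weights are summable, so $\alpha_P\le C_{d,\kappa}\sup_u\|\chi_uP\|_2^2$, and in particular some site $x_E$ satisfies $\|\chi_{x_E}P_E\|_2\ge c\sqrt{\alpha_E}$ (and some $x_{\phi_n}$ satisfies $\|\chi_{x_{\phi_n}}\phi_n\|\ge c\sqrt{\alpha_{\phi_n}}$). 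Putting $u=x_E$ in $(iv)$ gives
\[
\|\chi_xP_E\|_2\le\frac{C\,\alpha_E\,\e^{\eps|x_E|^{\zeta}}\e^{-\sigma|x-x_E|^{\zeta}}}{\|\chi_{x_E}P_E\|_2}\le\frac{C}{c}\,\sqrt{\alpha_E}\;\e^{\eps|x_E|^{\zeta}}\e^{-\sigma|x-x_E|^{\zeta}} ,
\]
which is $(vii)$, while $(vii)\Rightarrow(iv)$ is the first device applied to the product of the bounds at $x$ and at $u$. Inside (SUDEC+): $(iii)\Rightarrow(ii)$ (take $\psi=\phi$) and $(iii)\Rightarrow(i)$ (restrict to the basis) are immediate, and $(i)\Rightarrow(iv)$, $(iii)\Rightarrow(iv)$ follow by squaring and summing, since $\|\chi_xP_E\|_2^2\|\chi_uP_E\|_2^2=\sum_{n,m}\|\chi_x\phi_n\|^2\|\chi_u\phi_m\|^2$ and $\sum_n\alpha_{\phi_n}=\alpha_E<\infty$ by \eqref{sumalphaH}, whence
\[
\|\chi_xP_E\|_2^2\,\|\chi_uP_E\|_2^2\le C^2\alpha_E^2\,\e^{2\eps|u|^{\zeta}}\e^{-2\sigma|x-u|^{\zeta}} ;
\]
note no finiteness of the multiplicity is used here. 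Inside (SULE+): $(v)\Rightarrow(vi)$ is Cauchy--Schwarz on $\phi=\sum_nc_n\phi_n$, and $(vii)\Rightarrow(vi)$ uses $\|\chi_x\phi\|=\|\chi_xP_E\phi\|\le\|\chi_xP_E\|_2$; the reverse arrows $(vi)\Rightarrow(v)$, $(vii)\Rightarrow(v)$, $(vii)\Rightarrow(iii)$, $(vii)\Rightarrow(ii)$ are obtained by combining the centred$\,\to\,$correlation conversion with the weight substitutions established in the last paragraph. Finally $(ii)$ rejoins this web by the classical common‑centre argument: $(ii)$ gives, via the engine, an individual centre $x_\phi$ for each unit $\phi\in\h_E$, and two far‑apart centres would be inconsistent with the localisation $(ii)$ imposes on a normalised combination of the corresponding eigenvectors, so all of them may be replaced by a single $x_E$, yielding $(vi)$.

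The quantitative assertions now follow from $(vii)$, equivalent to $(i)$--$(iv)$ by the above. First, $\tr P_E=\sum_x\|\chi_xP_E\|_2^2\le C\alpha_E\e^{2\eps|x_E|^{\zeta}}\sum_x\e^{-2\sigma|x-x_E|^{\zeta}}<\infty$. For \eqref{bndtr} write $\|P_E\|_2^2=\|P_E\chi_{\Lambda_L(x_E)}\|_2^2+\|P_E\chi_{\Lambda_L(x_E)^c}\|_2^2$; on $\Lambda_L(x_E)$ with $L=2\langle x_E\rangle$ one has $T\le(3\langle x_E\rangle)^{\kappa}$, so $\|P_E\chi_{\Lambda_L(x_E)}\|_2=\|(P_ET^{-1})(T\chi_{\Lambda_L(x_E)})\|_2\le(3\langle x_E\rangle)^{\kappa}\sqrt{\alpha_E}$, while $(vii)$ bounds the complementary piece by $C\alpha_E\e^{2\eps|x_E|^{\zeta}}\e^{-\sigma'\langle x_E\rangle^{\zeta}}$, which for a suitable small $\eps=\eps(\sigma,\zeta)$ is $\le C_{\sigma,\zeta}\alpha_E$; adding, $\tr P_E\le C_{\sigma,\zeta}\alpha_E\langle x_E\rangle^{2\kappa}$, the constant depending only on $\sigma,\zeta$ and the fixed $\kappa$. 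Since $\tr P_E\ge1$ this forces $\alpha_E\ge c_{\sigma,\zeta}\langle x_E\rangle^{-2\kappa}$, whence for $L\ge1$,
\[
\alpha_{H,\E}=\sum_{E\in\E}\alpha_E\ \ge\ \sum_{E:\,|x_E|\le L}\alpha_E\ \ge\ \tilde N_L\cdot c_{\sigma,\zeta}\,(\sqrt2\,L)^{-2\kappa} ,
\]
which is \eqref{NLH'}.

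It remains to supply the polynomial control of the weights and to treat the primed statements. From $(vii)$ — equivalently, from $\|\chi_x\phi\|\le\|\chi_xP_E\|_2\le C\sqrt{\alpha_E}\,\e^{\eps|x_E|^{\zeta}}\e^{-\sigma|x-x_E|^{\zeta}}$ for every unit $\phi\in\h_E$ — at least half the mass of any such $\phi$ lies in a box $\Lambda_L(x_E)$ with $L$ of order $\langle x_E\rangle$, so $\alpha_\phi=\|T^{-1}\phi\|^2\ge c\langle x_E\rangle^{-2\kappa}$; combined with $\alpha_E\le\alpha_{H,\E}$ this yields $\sqrt{\alpha_E}\le C\langle x_E\rangle^{\kappa}\sqrt{\alpha_\phi}$, and likewise $\sqrt{\alpha_E}\le C\langle x_E\rangle^{\kappa}\sqrt{\alpha_{\phi_n}}$ and $\alpha_E\le C\langle x_E\rangle^{2\kappa}\sqrt{\alpha_\phi\alpha_\psi}$ — exactly the substitutions that turn $(vii)$ into $(v)$, $(vi)$, $(iii)$, $(ii)$ with the sharp factors. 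As for the primed forms: since $\alpha_\phi,\alpha_{\phi_n}\le1$ and $\alpha_E\le\alpha_{H,\E}$, every unprimed statement trivially implies its primed one; conversely a primed statement — say the projection form $\|\chi_xP_E\|_2\|\chi_uP_E\|_2\le C\e^{\eps|u|^{\zeta}}\e^{-\sigma|x-u|^{\zeta}}$ — already forces $\tr P_E<\infty$ (fix $x_0$ with $\|\chi_{x_0}P_E\|_2>0$; the bound with $u=x_0$ makes $\sum_{|x-x_0|\ge R}\|\chi_xP_E\|_2^2$ converge and the finitely many remaining terms are finite), after which one extracts the centre $x_E$ as above, derives an \emph{a priori} lower bound $\alpha_E\ge c\langle x_E\rangle^{-O(1)}$ — here one balances the slowly growing $\log(1/\alpha_E)$ against a power of $\langle x_E\rangle$ — and trades $\alpha_E^{-1/2}$ for $\alpha_E^{1/2}$ to land back in $(vii)$. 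The hardest part is precisely this bookkeeping: from \emph{each} hypothesis one must first deduce that the multiplicity is finite and polynomially controlled by $\langle x_E\rangle$, so that every weight ratio and multiplicity factor becomes subexponential in $|x_E|$ and hence innocuous; the identity $\alpha_P\asymp\sum_u\langle u\rangle^{-2\kappa}\|\chi_uP\|_2^2$ with $\kappa>d/2$ is what makes this mechanism available.
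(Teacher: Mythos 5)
Your proposal uses the same engine as the paper (the quantitative comparison $\alpha_P\asymp\sum_u\langle u\rangle^{-2\kappa}\|\chi_u P\|_2^2$, the extraction of a localization centre, the centred$\leftrightarrow$correlation conversion with a small loss of $\sigma$, the common-centre argument for $(ii)$, and the mass-in-a-box lower bound $\alpha_\bullet\ge c\langle x_E\rangle^{-2\kappa}$), and your treatments of \eqref{bndtr}, \eqref{NLH'} and of the primed statements are sound variants of the paper's. However, as written, your implication diagram is not closed: you establish $(i),(iii)\Rightarrow(iv)\Leftrightarrow(vii)\Rightarrow\{(ii),(iii),(v),(vi)\}$ together with $(v)\Leftrightarrow(vi)$ and $(ii)\Rightarrow(vi)$, but every arrow out of $(v)$ and $(vi)$ lands again in $\{(v),(vi)\}$. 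So from the hypothesis $(v)$ or $(vi)$ alone you never reach the (SUDEC+) block or $(vii)$, and the claimed equivalence is not proved.

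The missing arrow is easy to supply and in fact reuses a device you have already deployed. From $(v)$ one gets $(vii)$ directly by squaring and summing over the basis, exactly as in your step $(i)\Rightarrow(iv)$:
\begin{equation*}
\normsch{\chi_x P_E}^2=\sum_n\norm{\chi_x\phi_n}^2
\le C^2\,\e^{2\eps|x_E|^\zeta}\e^{-2\sigma|x-x_E|^\zeta}\sum_n\alpha_{\phi_n}
= C^2\,\alpha_E\,\e^{2\eps|x_E|^\zeta}\e^{-2\sigma|x-x_E|^\zeta},
\end{equation*}
the convergence of $\sum_n\alpha_{\phi_n}=\alpha_E$ being guaranteed by \eqref{sumalphaH}. (The paper instead closes the loop by the implication $(v)\Rightarrow(i)$, converting the centred product bound to a correlation bound and reinstating $\sqrt{\alpha_{\phi_n}\alpha_{\phi_m}}$ via the lower bound $\alpha_{\phi_n}\gtrsim\langle x_E\rangle^{-2\kappa}$ and the domination $\e^{-2\eps|x_E|^\zeta}\le C\langle x_E\rangle^{-2\kappa}$.) With either version inserted, the cycle $(v)\Rightarrow(vii)\Rightarrow\cdots$ is complete and your argument becomes a full proof, essentially along the paper's lines; your derivation of \eqref{NLH'} from the a priori bound $\alpha_E\ge c\langle x_E\rangle^{-2\kappa}$ is a slight shortcut compared with the paper's mass-counting argument but is equally valid.
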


\begin{remark}
(i) The bootstrap Multiscale Analysis of \cite{GK1} yields (SULE+) and (SUDEC+). See also \cite{GK2}.
\\
(ii) If \eqref{sule+} holds  with $\eps=0$, then the multiplicity is uniformly finite. This can be seen from Proposition~\ref{centers} , 
since \eqref{center} would hold with $\delta=0$.
\end{remark}

Next, notice that $\normsch{\chi_x P_E \chi_y} \leq  \normsch{\chi_x P_E} \normsch{\chi_y P_E}$, so that \eqref{sudecP} implies 
\eqref{SULP}. One way wonder whether fast decay of $\normsch{\chi_x P_E \chi_y}$ is equivalent to the one of 
$ \normsch{\chi_x P_E} \normsch{\chi_y P_E}$. Such a question was raised in \cite{DRJLS2}. In $\ell^2(\z^d)$, \cite{DRJLS2} proved 
the equivalence when the multiplicity is one ($\tr P_{E}=1$), and  \cite{EGS} showed that if $\tr P_{E} <\infty$ and 
$|\scal{\dirac y, P_E \dirac x}|\leq C_\eps \e^{\eps(|x|+|y|)} \e^{-\sigma|x-y|}$, then 
there exists a basis of $\h_E$ with a (SUDEC) type property.

We summarize the relationships between the three classes of properties in the following optimal theorem. In particular it answers 
to  \cite{DRJLS2}'s  question about the equivalence between (DL) and (SULE), and the role played by the multiplicity (they were considering simple spectrum only).
\begin{theorem}\label{connexion}

Assume \eqref{sumalphaH}. Then \\
(i) We have
 \begin{equation}\label{main impl}\begin{pmatrix} \mathrm{SUDEC +} \\ \mathrm{SULE +} \end{pmatrix} \Longrightarrow 
 \begin{pmatrix} \mathrm{SUDEC} \\ 
\mathrm{SULE}\end{pmatrix} \Longrightarrow
\begin{pmatrix} \mathrm{DL} \end{pmatrix}.
\end{equation}
(ii) Assume that $\tr P_E <\infty$ for any $E\in\E$. Then
 \begin{equation}\label{main equiv}
\begin{pmatrix} \mathrm{SUDEC} \\ 
\mathrm{SULE}\end{pmatrix} \Longleftrightarrow
\begin{pmatrix} \mathrm{DL} \end{pmatrix}.
\end{equation}
(iii) There exist Schr\"odinger operators with eigenvalues of infinite multiplicity and for which (DL) holds but not (SULE/SUDEC). 
There exist Schr\"odinger operators for which (SULE/SUDEC) holds but not (SULE+/SUDEC+), as soon as  eigenvalues are not simple. 
But (SULE/SUDEC) together with property \eqref{center} is equivalent to (SULE+/SUDEC+).
\end{theorem}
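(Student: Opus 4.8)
\emph{Part (i).} The plan is to reduce the four assertions to Theorems~\ref{equiv1}, \ref{thmequiv} and~\ref{thmfinrank} supplemented by a few elementary computations and three explicit constructions. The first implication in (i) is pure specialization: \eqref{sudec+} with $n=m$ is \eqref{sudecH}, and a common localization center $x_E$ (property \eqref{sule+}, where $\sqrt{\alpha_{\phi_n}}\le 1$) is in particular an admissible choice of individual centers in \eqref{suleH}; hence (SUDEC$+$)/(SULE$+$) $\Rightarrow$ (SUDEC)/(SULE). For (SUDEC)/(SULE) $\Rightarrow$ (DL) I would take $\E$ to be the set of all eigenvalues of $H$ in $I$ (legitimate since $H$ is pure point on $I$), fix a (SUDEC) basis $\G_\E=\{\phi_n\}_n$ of $\h_\E$, and expand the localized propagator in it:
\[
\normsch{\chi_x\,\e^{-itH}\f(H)\,\chi_u}\le\sum_n\abs{\f(E_n)}\,\normsch{\chi_x P_{\phi_n}\chi_u}\le\norm{\f}_\infty\,C_{\zeta,\sigma,\eps}\,\e^{\eps\abs{u}^\zeta}\e^{-\sigma\abs{x-u}^\zeta}\sum_n\alpha_{\phi_n},
\]
the series converging absolutely in Hilbert--Schmidt norm because $\sum_n\alpha_{\phi_n}=\alpha_{H,\E}<\infty$ by \eqref{sumalphaH}. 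This is property $(iv)$ of Theorem~\ref{equiv1}, i.e. (DL); alternatively one routes through (SULE) via Theorem~\ref{thmequiv}.

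\emph{Part (ii).} Only (DL) $\Rightarrow$ (SUDEC)/(SULE) needs an argument, the converse being part of (i). By Theorem~\ref{equiv1}, (DL) yields the (SULP) bound \eqref{SULP}; choosing $\f\equiv 1$ near a prescribed compact part of $\E$ gives, for $E$ there, $\normsch{\chi_x P_E\chi_u}\le C_{\zeta,\sigma,\eps}\,\e^{\eps\abs{u}^\zeta}\e^{-\sigma\abs{x-u}^\zeta}$, and, by self-adjointness of $P_E$, also with $\abs{x}^\zeta$ in place of $\abs{u}^\zeta$. When $\tr P_E<\infty$, $P_E$ is finite rank and one must extract from this a localized orthonormal basis of $\h_E$: this is, in the present generality ($\R^d$, subexponential rates), an adaptation of the construction of \cite{EGS}. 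I would run a scale-$L$ box cover of $\z^d$, note that the number of boxes $\Lambda_L(z)$ carrying non-negligible mass $\tr(\chi_{\Lambda_L(z)}P_E)$ is at most a fixed multiple of $\tr P_E$, diagonalize $\chi_{\Lambda_L(z)}P_E\chi_{\Lambda_L(z)}$ on each such box, and correct the resulting near-eigenvectors into an honest basis $\{\phi_n\}$ of $\h_E$, each concentrated --- up to a subexponential tail governed by the (SULP) bound --- near the center of its box. Since $\sum_n\alpha_{\phi_n}=\alpha_E$ for any basis of $\h_E$ and $\sum_{E\in\E}\alpha_E<\infty$, assembling over $E\in\E$ yields a basis of $\h_\E$ and, tracking the weights (plus a routine uniformity check in the energy parameter), the bound \eqref{sudecH}; then (SULE) follows by Theorem~\ref{thmequiv}. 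The delicate point is that the $\phi_n$ must be produced at mutually well-separated centers --- within a fixed eigenspace, and across eigenspaces with nearby energies --- so that the correlation cross-terms $\scal{\chi_x\phi_n,\chi_x\phi_m}\scal{\chi_u\phi_m,\chi_u\phi_n}$ stay negligible and the factor $\alpha_{\phi_n}$, rather than merely $1$, is recovered.

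\emph{Part (iii).} For the first example (see Section~\ref{counterex}) I would take the Landau Hamiltonian $H=(-i\nabla-A)^2$ with constant field on $\mathrm{L}^2(\R^2)$: its spectrum is the set of Landau levels, each an eigenvalue of infinite multiplicity whose spectral projection $P_E$ has a Gaussian kernel, so \eqref{SULP} --- hence (DL) --- and \eqref{sumalphaH} hold on a neighbourhood of any single level ($\normsch{\chi_x P_E}$ is constant in $x$ and $2\kappa>d=2$), while $\h_E$ admits no orthonormal basis of subexponentially localized functions, by the topological (Chern-number) obstruction to fast-decaying Wannier systems; thus (SULE)/(SUDEC) fails. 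For the second, a pure point operator $H=H_0\oplus H_0'$ on $\ell^2(\z^d)\oplus\ell^2(\z^d)$, with $H_0$ (SULE) of simple spectrum $\{E_n\}$ and $H_0'$ having the same eigenvalues but eigenfunctions centered at points $c_n'$ with $\abs{c_n-c_n'}\to\infty$: every $E_n$ then has multiplicity two, (SULE) holds with separate centers, but no common center exists, so (SULE$+$) fails and property \eqref{center} is violated. Finally, for ``(SULE)/(SUDEC)$\,+\,$\eqref{center}$\,\Leftrightarrow\,$(SULE$+$)/(SUDEC$+$)'': the reverse implication is \eqref{bndtr}--\eqref{NLH'} combined with part~(i); the forward one combines a direct tail estimate from \eqref{suleH} --- at least half of the mass of $\phi_n$ lies within a ball of radius $\sim\abs{x_{\phi_n}}$, whence $\sqrt{\alpha_{\phi_n}}\ge c\,\scal{x_{\phi_n}}^{-\kappa}$, which already upgrades \eqref{suleH} to its $\sqrt{\alpha_{\phi_n}}$-weighted form with the individual centers --- with property \eqref{center}, whose role is precisely to let one replace the $x_{\phi_n}$ (for $E_n=E$) by a single $x_E$ at the cost of an arbitrarily small increase of $\eps$; one then reads off (SULE$+$)/(SUDEC$+$) through Theorem~\ref{thmfinrank} (and Proposition~\ref{centers}).

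\emph{Main obstacle.} I expect the crux to be Part~(ii): the $\R^d$/subexponential extraction of a localized orthonormal basis from the (SULP) estimate, with the separation-of-centers bookkeeping required to recover the weight $\alpha_{\phi_n}$ rather than $1$; a secondary difficulty is the rigorous verification in Part~(iii) that a Landau level carries no subexponentially localized orthonormal basis.
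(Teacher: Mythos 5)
Your Part (i) is correct and matches the paper's intent: the first arrow is pure specialization, and the second arrow is obtained exactly as you describe, by expanding $\e^{-itH}\f(H)$ in the (SUDEC) basis and using $\sum_n\alpha_{\phi_n}=\alpha_{H,\E}<\infty$ to sum; this lands on property $(iv)$ of Theorem~\ref{equiv1}. Your Part (iii) also aligns with Section~\ref{counterex}: the Landau Hamiltonian and a union of far-apart identical clusters are precisely the paper's examples. Your appeal to the Chern-number obstruction to fast-decaying Wannier systems is in fact closer to the paper's remark (via quantization of the Hall conductance) than to its primary argument, which is a direct Stirling-type computation on the angular-momentum basis $\varphi_n$ of the lowest Landau level; both are legitimate. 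The small confusion at the end of (iii) --- citing \eqref{bndtr}--\eqref{NLH'} for the reverse implication --- is harmless: what actually gives \eqref{center} from (SULE+)/(SUDEC+) is Proposition~\ref{centers}, once one notes that (SUDEC+) gives (SUDEC) for all vectors in $\mathrm{Ran}\,P_E$ (Theorem~\ref{thmfinrank}~(ii)).

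The genuine divergence is in Part (ii), which you correctly flag as the crux. You propose a box-cover/diagonalization scheme in the spirit of \cite{EGS} and leave it as an admitted sketch (well-separated centers, cross-term bookkeeping, recovering the weight $\alpha_{\phi_n}$). The paper uses a simpler and fully explicit iterative construction: fix $E$, pick $x_E$ maximizing $\normsch{\chi_x P_E}$ (a maximum exists since $\sum_x\normsch{\chi_x P_E}^2=\tr P_E<\infty$), pick a near-maximizer $\eta$ of $\norm{P_E\chi_{x_E}\eta}$ over unit vectors, set $\phi_1=P_E\chi_{x_E}\eta/\norm{P_E\chi_{x_E}\eta}$, and show directly that $\alpha_{\phi_1}\le 4C_d\norm{P_E\chi_{x_E}\eta}^2$ while \eqref{SULP} gives
$\norm{\chi_x\phi_1}\le C\,\alpha_{\phi_1}^{-1/2}\,\e^{\eps|x_E|^\zeta}\e^{-\sigma|x-x_E|^\zeta}$.
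One then passes to $P_{E,1}=P_E-P_{\phi_1}$ --- which still satisfies \eqref{SULP} by the triangle inequality, the extra term $\norm{\chi_x\phi_1}\norm{\chi_u\phi_1}$ being controlled because $\alpha_{\phi_1}^{-1}\le C\scal{x_E}^{2\kappa}$ by Lemma~\ref{lemalphax} --- and iterates until $P_{E,N}=0$. This yields a genuine orthonormal basis of $\h_E$ with a $(\mathrm{SULE}_f)$ bound for $f(s)=s^{-1/2}$, and the unwanted $\alpha_n^{-1/2}$ factor is removed by the same $f$-calculus as in Theorem~\ref{techthme}, at the cost of an arbitrarily small fraction of $\sigma$. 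This sidesteps the near-eigenvector corrections and separation-of-centers accounting that your box-cover sketch would have to make rigorous, and it is the step you should supply to make the proof complete. The paper does mention the \cite{EGS} alternative (geometric mean of \eqref{SULP} and $\norm{\chi_x\phi_n}\le\norm{\chi_{x_{E_n}}P_{E,n}}$), but notes that it halves $\sigma$ at each step, so it is only viable because the rank is finite.
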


\begin{remark}
Of course, when the multiplicity is one, then  (SULE/SUDEC) and (SULE+/SUDEC+) are the same.  (SULE+/SUDEC+) provides a strong condition 
on the spatial repartition of the centers of localization described in Proposition~\ref{centers} below. There is no reason for such a rigid 
condition on centers to hold in great generality, for eigenfunctions associated to a given eigenvalue may live far apart. For instance, 
one may consider the Laplacian on a subgraph of $\z^2$, for which there exist compactly supported eigenfunctions associated to the same 
eigenvalue and with disjoint supports.

However it is easy to see that (SULE/SUDEC) together with the property \eqref{center} implies (SULE+/SUDEC+).
\end{remark}

\medskip

As  previously mentioned, these results remain valid in a general framework
that we briefly outline. Let us consider an abstract separable Hilbert space $\mathbb{H}$ 
equipped with a basis denoted by $\{e_n\}_{n\in\N}$ that we suppose to be orthonormal.
Adopting notations of Section~\ref{results}, we  define the subexponential moment with parameters 
$\sigma$ and $\zeta$:
\begin{equation}\label{sub mom}
M_{e_u}(\sigma,\zeta,\f,t):=\sum_{n\geq 0} \e^{\sigma n^\zeta} |\scal{\e^{-itH}\f(H) e_u,e_n}_{\mathbb{H}}|^2,
\end{equation}
where $e_u\in\mathbb{H}$ is an initial state and $\scal{.,.}_{\mathbb{H}}$ denotes the inner product in $\mathbb{H}$. 
Note that this corresponds to \eqref{moment} with $\chi_u$ replaced by $\Pi_{e_u}$ the rank one projection onto $e_u$.
Theorem~\ref{equiv1} is still valid in this context. Indeed, given $L>0$ and $u\in\N$, we consider the ball $B_L(e_u) := \{e_n, |n-u|\leq L \}$. 
Notice that $\#B_L(e_u)\le 2L+1$ uniformly in $u$, so that Lemma~\ref{order} holds true with $d=1$ in \eqref{N}, which is the only place in 
the proof where the geometry plays a role. 

However, we may object that we loose the physical interpretation of  moments and of dynamical localization. From this point of view it 
is interesting to consider graphs as generalizations of the lattice $\z^d$. Let  $\mathbb{G}$ be a graph with vertices $v\in\mathbb{V}$, 
and set $\mathbb{H}= \ell^2(\mathbb{V})$.
\\
Let $\{\delta_v \}_{v\in \mathbb{V}}$ be the canonical basis of $\ell^2(\mathbb{V})$. We have a natural notion of distance $d$ in 
$\mathbb{V}$: $d(u,v)=\inf \#\{p(u,v)\}$, where $p(u,v)$ is a path in $\mathbb{G}$ joining $u$ and $v$ (if $\mathbb{G}$ is a tree 
then there is only one such path, but $\mathbb{G}$ may contain loops). We can thus define spheres $S_L(u)=\{v\in \mathbb{V}; \ d(u,v)= L\}$ 
centered at $u\in\mathbb{G}$ and of radius $L$. We define $\mathcal{N}_L(u) = \# S_L(u)$.

\begin{theorem}\label{geomthm1}
Assume there exists $\beta\in [0,1)$ such that
\begin{equation}\label{volcnd}
 \sup_u \mathcal{N}_L(u)\leq \e^{L^\beta},
\end{equation}
then Theorem~\ref{equiv1} holds for $\zeta> \beta$.
\end{theorem}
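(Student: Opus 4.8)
The plan is to revisit the proof of Theorem~\ref{equiv1} and track how the geometry of the lattice enters, replacing the volume bound $\#B_L(e_u)\le 2L+1$ (or $\#\Lambda_L(u)\le(2L+1)^d$ in the $\z^d$ case) with the subexponential volume growth \eqref{volcnd}. Recall that the only geometric input is in the step $(v)\Longrightarrow(iv)$, through the combinatorial Lemma~\ref{order} (which orders the spectral data so that $|x_{\phi_n}|$ grows like $n^{1/2\kappa}$, using the polynomial growth of balls). On a graph with $\mathcal{N}_L(u)\le \e^{L^\beta}$, the volume of the ball $\bar B_L(u)=\{v:d(u,v)\le L\}$ is bounded by $\sum_{k=0}^{L}\e^{k^\beta}\le (L+1)\e^{L^\beta}\le \e^{2L^\beta}$ for $L$ large. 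So the analogue of the polynomial bound is a stretched-exponential bound with exponent $\beta$.

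First I would redo the key estimate in $(v)\Longrightarrow(iv)$: one writes $\chi_x\e^{-itH}\f(H)\chi_u$ by inserting the spectral decomposition $\f(H)=\int \f(E)\,dP_E$, uses $(v)$ to control $\normsch{\chi_x P_E\chi_u}\le \f(E)^{-1}C\,\e^{\eps|u|^\zeta}\e^{-\sigma|x-u|^\zeta}$ on $\supp\f$, and then sums over eigenvalues. The summability of the contributions of infinitely many eigenvalues is what forced the quantities $\alpha_E$ and the trace bound \eqref{sumalphaH} into the picture; combined with the estimate on $\#\{E:|x_E|\le L\}$ one gets convergence. On the graph, the number of eigenvalues with center in $\bar B_L(u)$ is bounded, via \eqref{sumalphaH}, by $C\,\alpha_{H,\E}\,\mathrm{vol}(\bar B_L(u))\le C\,\e^{2L^\beta}$. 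So when one sums a series whose general term decays like $\e^{-\sigma L^\zeta}$ against a counting measure of growth $\e^{2L^\beta}$, convergence holds precisely when $\zeta>\beta$ — and this is exactly the hypothesis of the theorem. This is where the restriction $\zeta>\beta$ is genuinely needed and cannot be relaxed.

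The second place to check is the deduction $(i)\Longrightarrow(v)$, i.e. extracting pointwise decay of $\normsch{\chi_x P_E\chi_u}$ from the time-averaged moment bound \eqref{exp_DL}. Here I would argue that this direction uses only the RAGE/Parseval-type averaging and the subexponential weight $\e^{\sigma|X_u|^\zeta}$, not the volume of balls, so it carries over verbatim once $|X_u|$ is reinterpreted as multiplication by $d(\cdot,u)$ on $\ell^2(\mathbb V)$ and $\chi_u$ as the rank-one projection $\Pi_{\delta_u}$ (or the indicator of a bounded neighborhood of $u$). One should double-check that the definition \eqref{moment} still makes sense — it does, since $\e^{\sigma d(v,u)^\zeta}$ is a well-defined (possibly unbounded) multiplication operator — and that the elementary inequalities relating $M_u$, $\M_u$ and the Laplace-transform version in $(i)$–$(iii)$ are purely operator-theoretic. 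So the structure of the proof of Theorem~\ref{equiv1} is preserved, with Lemma~\ref{order}'s polynomial ordering replaced by a stretched-exponential one.

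The main obstacle, and the heart of the matter, is the quantitative bound on the number of centers in a ball, i.e. the graph analogue of the final assertions of Theorem~\ref{thmequiv} and \eqref{NLH'}. One needs an operator $T$ on $\ell^2(\mathbb V)$, playing the role of $\scal{x}^\kappa$, such that $\tr\{T^{-1}P_{\E}T^{-1}\}<\infty$ and such that $T^{-1}$ is small enough off a ball to convert this trace bound into the counting estimate $\#\{E:|x_E|\le L\}\le C\,\e^{cL^\beta}$. A natural choice is $T(v)=\e^{\tfrac{\gamma}{2}d(v,v_0)^{\zeta}}$ for a fixed basepoint $v_0$ and some small $\gamma<\sigma$ with $\beta<\zeta$; then $\tr\{T^{-1}P_{\E}T^{-1}\}=\sum_v \e^{-\gamma d(v,v_0)^\zeta}\la\delta_v,P_{\E}\delta_v\ra$, and one checks finiteness and the requisite decay using \eqref{volcnd} and $\zeta>\beta$. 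Making this choice consistent with the definition of $\alpha_E$ used throughout — and verifying that the resulting weight still lets the $(v)\Longrightarrow(iv)$ summation go through with the SAME $\zeta$ — is the delicate bookkeeping step, but there is no new idea: it is the polynomial argument of the $\z^d$ case transposed to subexponential growth, and the condition $\zeta>\beta$ is exactly what makes every series converge.
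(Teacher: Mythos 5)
You correctly locate where the geometry of the lattice enters — the step $(v)\Longrightarrow(iv)$ through the combinatorial Lemma~\ref{order} — and you correctly identify $\zeta>\beta$ as the condition that makes the relevant series converge. However, the mechanism you propose for the counting estimate is wrong, and this is a genuine gap. You invoke $\alpha_E$, the trace condition \eqref{sumalphaH}, and the count of localization centers $\#\{E:|x_E|\leq L\}$ from \eqref{NLH'}. None of this is available here: \eqref{sumalphaH} (and its graph analogue \eqref{alphacnd}) is \emph{not} a hypothesis of Theorem~\ref{equiv1} or of Theorem~\ref{geomthm1} — it is assumed only in Theorems~\ref{thmequiv}, \ref{thmfinrank} and their graph extension Theorem~\ref{geomthm2}. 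Your argument imports machinery from the (SULE)/(SUDEC) theorems into the proof of a statement that deliberately does not assume it, so as written it does not establish Theorem~\ref{geomthm1}.

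The actual argument is both simpler and more robust. Lemma~\ref{order} is self-contained: it uses only the two spectral identities
\begin{equation*}
\sum_x a_{kx}(u)=1, \qquad \sum_{k}a_{kx}(u)\le \tr(\chi_x P_I\chi_x)\le 1,
\end{equation*}
together with the bound on the volume of a ball of radius $L$. Under \eqref{volcnd} one has $\#\bar B_L(u)\le\sum_{r\le L}\e^{r^\beta}\le C\,\e^{2L^\beta}$, so the double-counting in the proof of Lemma~\ref{order} gives $\#J_u(L)\le C\,\e^{2L^\beta}$ in place of $C_dL^d$, and after the exponential rescaling $l=\tfrac12\e^{\sigma L^\zeta}$ the bound \eqref{N} becomes $N_{L,\zeta,\sigma,u}\le\e^{C_{\sigma,\zeta,\beta}(\log L)^{\beta/\zeta}}$. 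Reordering, $A_k(\sigma,\zeta,u)\ge\e^{c(\log k)^{\zeta/\beta}}$, and since $\zeta/\beta>1$ this dominates any power of $k$, so the series $\sum_k A_k^{-\gamma/(2-\gamma)}$ in \eqref{geom cnd} of the proof of Theorem~\ref{kernel} still converges. That is the whole modification; no weight operator $T$, no $\alpha_E$, and no bound on the repartition of centers is needed. Your observation that $(i)\Longrightarrow(v)$ (Theorem~\ref{proj}) carries over without geometric input is correct.
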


The result thus still applies to graphs but with moderate growth. As example, rooted trees, as in \cite{Br}, satisfy to the growth condition 
\eqref{volcnd}.  And random Schr\"odinger operators on such rooted trees are shown to exhibit dynamical localization  \cite{Br}. See also
\cite{Tau}.

We turn to the (SUDEC) and (SULE) type properties. The geometry is further involved  in the condition
\begin{equation}\label{alphacnd}
 \alpha_{H,\E} = \sum_{u\in\mathbb{G}} \scal{\dirac u,T^{-1} P_\E T^{-1} \ \dirac u} <\infty,
\end{equation}
where $T$ is now the operator given by the multiplication by $e^{|u|^\alpha}$ for fixed $\alpha<\zeta$. We have
\begin{theorem}\label{geomthm2}
Assume \eqref{volcnd} for $\beta\in [0,1)$ and \eqref{alphacnd} holds for $\alpha\in (0,1)$, with $\beta<\alpha<\zeta$,
then Theorem~\ref{thmequiv} and Theorem~\ref{thmfinrank} hold.
\end{theorem}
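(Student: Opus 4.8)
The plan is to re-run the proofs of Theorem~\ref{thmequiv} and Theorem~\ref{thmfinrank} word for word, replacing the lattice $\z^d$ by the graph $\mathbb{G}$, the boxes $\Lambda_L(u)$ by the balls $B_L(u)=\{v\in\mathbb{V};\ d(u,v)\le L\}$, and keeping track of exactly where the Euclidean structure entered. Inspecting those arguments, the geometry is used in precisely two ways. First, one needs the number of sites in a ball of radius $L$ to be subexponential (so that a sum of the form $\sum_v \e^{-\sigma d(x,v)^\zeta}$ converges and, more importantly, so that spreading a localized bound $\e^{-\sigma d(x,u)^\zeta}$ over the annulus $S_L(u)$ costs only $\mathcal{N}_L(u)\le \e^{L^\beta}$, which is absorbed by $\e^{-\sigma L^\zeta}$ precisely because $\beta<\zeta$). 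Second, one needs the operator $T=\e^{|u|^\alpha}$ to be a legitimate weight: that is, $\alpha_{H,\E}=\tr\{T^{-1}P_\E T^{-1}\}<\infty$, which is hypothesis~\eqref{alphacnd}, and also that $T^{-1}$ kills exponential-type growth $\e^{\eps|u|^\zeta}$ when $\zeta>\alpha$ (used to turn the "$\eps|u|^\zeta$ slack" in (SUDEC)/(SULE) into genuine summability and into the counting bounds \eqref{bndtr}, \eqref{NLH'}).

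Concretely, I would proceed as follows. (1) Reprove the combinatorial ordering statement (the analogue of Lemma~\ref{order}/the "$|x_{\phi_n}|\ge Cn^{1/2\kappa}$" tail of Theorem~\ref{thmequiv}) with $\z^d$ replaced by $\mathbb{G}$: here one uses $\#B_L(u)=\sum_{j\le L}\mathcal{N}_j(u)\le (L+1)\e^{L^\beta}$, and the bound $\alpha_{H,\E}<\infty$ with weight $\e^{|u|^\alpha}$ forces at most finitely many centres in each ball, with the growth of the counting function dictated by how fast $\e^{-|u|^\alpha}$ decays; this is where the constraint $\beta<\alpha$ is consumed. (2) The equivalences (SUDEC)$\Leftrightarrow$(SUDEC$'$)$\Leftrightarrow$(SULE) of Theorem~\ref{thmequiv} are then copied verbatim: the implication "(SUDEC)$\Rightarrow$(SULE)" packages the $\alpha_{\phi_n}$ into an $\e^{\eps|x_{\phi_n}|^\zeta}$ factor using $\alpha_{\phi_n}\ge c\,\e^{-|x_{\phi_n}|^{2\alpha}}$ type lower bounds coming from $\|T^{-1}\phi_n\|^2$, valid because $2\alpha<2\zeta$ can be arranged by shrinking $\eps$; the reverse implication sums a geometric-type series over $S_L$, convergent because $\beta<\zeta$. (3) For Theorem~\ref{thmfinrank}, the rank-one$\to$full-range upgrades (i)$\Leftrightarrow$(ii)$\Leftrightarrow\cdots$ are purely Hilbert-space and Cauchy--Schwarz arguments involving $P_E=\sum_n P_{\phi_n}$ and $\normsch{\chi_x P_E}^2=\sum_n\|\chi_x\phi_n\|^2$; the only geometric input is again "(SUDEC+)$\Rightarrow$(SULE+)" and the finiteness/counting conclusions \eqref{bndtr}–\eqref{NLH'}, which read off $\tr P_E\le C\,\alpha_E\,\e^{2|x_E|^\alpha}$ and $\tilde N_L\le C\,\alpha_{H,\E}\,\e^{2L^\alpha}$ from the weight $T$ exactly as before, with $\scal{x}^{2\kappa}$ replaced by $\e^{2|x|^\alpha}$.

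The main obstacle is bookkeeping rather than a new idea: one must check at every estimate that the slack exponent and the weight exponent are separated in the right order, i.e. that throughout one has $\beta<\alpha<\zeta$ with strict inequalities, so that (a) annulus volumes $\e^{L^\beta}$ are beaten by decay $\e^{-\sigma L^\zeta}$, (b) the weight $\e^{-|u|^\alpha}$ is summable against $\e^{L^\beta}$-sized shells, and (c) the error $\e^{\eps|u|^\zeta}$ is still dominated by $\e^{-\sigma|x-u|^\zeta}$ after being pushed through $T^{-1}$. Once the three nested inequalities are in force, no step of the original proofs uses anything about $\z^d$ beyond what \eqref{volcnd} and \eqref{alphacnd} supply, so the conclusions of Theorem~\ref{thmequiv} and Theorem~\ref{thmfinrank} carry over unchanged (with $\scal{x}^{2\kappa}$ systematically replaced by $\e^{c|x|^\alpha}$ in the multiplicity and counting bounds). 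I would therefore only write out in detail the graph analogue of Lemma~\ref{order} and the two implications where $\beta<\zeta$ and $\beta<\alpha$ are invoked, and then indicate that the remaining implications are identical to those in Sections~\ref{proofs}.
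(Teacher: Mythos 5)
Your proposal takes essentially the same approach as the paper: re-run the proofs of Theorem~\ref{thmequiv} and Theorem~\ref{thmfinrank} tracking where the geometry of $\z^d$ enters, replace the polynomial weight $\scal{x}^{2\kappa}$ by $\e^{|x|^\alpha}$ and the Euclidean volume count by \eqref{volcnd}, and observe that the strict inequalities $\beta<\alpha<\zeta$ are exactly what is needed to absorb shell volumes, keep $T^{-1}$ trace class against $P_\E$, and convert the $\alpha_\phi$ lower bound into the $\e^{\eps|x_\phi|^\zeta}$ slack (the paper records this by saying Lemma~\ref{lemalphax} and Theorem~\ref{techthme} persist with $f(s)\le C_\eps\,\e^{(-\eps\log s)^{\zeta/\alpha}}$, $f(s)\ge C_\eps\,\e^{-(-\eps\log s)^{\zeta/\alpha}}$, and $\alpha_\phi\ge C\,\e^{-|x_\phi|^\alpha}$). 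One minor slip in your write-up: the lower bound on $\alpha_\phi$ coming from $\|T^{-1}\phi\|^2$ with $T^{-1}(x)=\e^{-|x|^\alpha}$ should read $\alpha_\phi\gtrsim \e^{-c|x_\phi|^\alpha}$ (exponent $\alpha$, constant in front), not $\e^{-|x_\phi|^{2\alpha}}$, but since the condition you invoke is $\alpha<\zeta$ this does not affect the argument.
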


\section{Proofs}\label{proofs}
\subsection{Dynamical localization}\label{type1}
In this section, we prove Theorem~\ref{equiv1}, as a combination of the theorems below.
\newline 
Given $u\in\z^d$ we consider the function 
\begin{equation}\label{P_u}\p_u(x,\f):=\sup_k \f(E_k)\normsch{\chi_x P_{E_k} \chi_u},\end{equation}
 and its corresponding moment 
\begin{equation}\label{L_u}\style{\lp_u(\sigma,\zeta,\f):=\sum_x \e^{\sigma\abs{x-u}^{\zeta}} \ {\p_u ^2 (x,\f)}},\end{equation}
for $\sigma>0, \zeta\in(0,1]$ and 
where $P_{E_k}$ denotes the eigenprojection associated to the eigenvalue $E_k$. The role of the function $\p_u(x,\f)$ above is to describe the 
decay of the eigenprojectors in terms of the subexponential moment \eqref{moment}, yielding directly \eqref{SULP}.
\begin{theorem}\label{proj}
Fix $\sigma>0$ and $\zeta\in (0,1]$. Then 
\begin{equation}\label{DL/SULP}
\liminf_{T\to\infty} \M_u(\sigma,\zeta,\f,T) \geq C_{\sigma,\zeta} \ \lp_u(\sigma,\zeta,\f).
\end{equation}
for any $\f\in\mathcal{C}^{\infty} _{0,+}(I)$  and all $u \in\z^d$. And thus
\begin{equation}\label{getsulp}
\p_u(x,\f)\leq C_{\sigma,\zeta} \ (\liminf_{T\to\infty} \M_u(\sigma,\zeta,\f,T))^{1/2} \e^{-\frac{\sigma}{2} \ \abs{x-u}^{\zeta}}.
\end{equation}
\end{theorem}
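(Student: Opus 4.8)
The plan is to extract the lower bound on the Cesàro-averaged moment $\M_u(\sigma,\zeta,\f,T)$ by expanding the time evolution spectrally and using the fact that, with pure point spectrum in $I$, the time average kills the off-diagonal (in energy) contributions. Concretely, write $\f(H)\e^{-itH} = \sum_k \f(E_k)\e^{-itE_k} P_{E_k}$ on $\h_I$, so that
\begin{equation}\label{exp_plan}
M_u(\sigma,\zeta,\f,t) = \tr\Bigl\{\chi_u \sum_{k,l} \f(E_k)\f(E_l)\e^{it(E_k-E_l)} P_{E_k}\,\e^{\sigma\abs{X_u}^\zeta}\,P_{E_l}\chi_u\Bigr\}.
\end{equation}
First I would justify this spectral expansion and the interchange of sum/trace/integral by a positivity/monotone-convergence argument (all terms with $k=l$ are nonnegative), possibly first truncating $\e^{\sigma\abs{X_u}^\zeta}$ to a bounded function $\e^{\sigma\abs{X_u}^\zeta}\wedge R$ and letting $R\to\infty$ at the end. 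Then $\frac1T\int_0^T \e^{it(E_k-E_l)}\dd t \to \delta_{k,l}$ as $T\to\infty$ (bounded by $1$ always, and for $k\neq l$ it tends to $0$), so by Fatou applied to the nonnegative diagonal part,
\begin{equation}\label{fatou_plan}
\liminf_{T\to\infty}\M_u(\sigma,\zeta,\f,T) \geq \sum_k \f(E_k)^2 \tr\{\chi_u P_{E_k}\,\e^{\sigma\abs{X_u}^\zeta}\,P_{E_k}\chi_u\}.
\end{equation}
The off-diagonal terms, being $O(1/T)$ after averaging once we control a tail, do not harm the $\liminf$; this is where the main technical bookkeeping lies.

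Next I would bound each diagonal term from below in terms of $\p_u(x,\f)$. Using $\e^{\sigma\abs{X_u}^\zeta} \geq \e^{\sigma\abs{x-u}^\zeta}\chi_x$ pointwise (as multiplication operators), for each fixed $x$,
\begin{equation}\label{lb_plan}
\tr\{\chi_u P_{E_k}\,\e^{\sigma\abs{X_u}^\zeta}\,P_{E_k}\chi_u\} \geq \e^{\sigma\abs{x-u}^\zeta}\,\tr\{\chi_u P_{E_k}\chi_x P_{E_k}\chi_u\} = \e^{\sigma\abs{x-u}^\zeta}\,\normsch{\chi_x P_{E_k}\chi_u}^2.
\end{equation}
Since the left side is in fact $\geq \sum_x \e^{\sigma\abs{x-u}^\zeta}\normsch{\chi_x P_{E_k}\chi_u}^2$ (the $\chi_x$ form a partition of unity and everything is nonnegative), combining with \eqref{fatou_plan} gives
\begin{equation}\label{combine_plan}
\liminf_{T\to\infty}\M_u(\sigma,\zeta,\f,T) \geq \sum_x \e^{\sigma\abs{x-u}^\zeta}\sum_k \f(E_k)^2\normsch{\chi_x P_{E_k}\chi_u}^2 \geq \sum_x \e^{\sigma\abs{x-u}^\zeta}\,\p_u^2(x,\f),
\end{equation}
where the last step replaces $\sum_k \f(E_k)^2\normsch{\chi_x P_{E_k}\chi_u}^2$ by its largest term $\sup_k \f(E_k)^2\normsch{\chi_x P_{E_k}\chi_u}^2 = \p_u^2(x,\f)$. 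This is precisely $\lp_u(\sigma,\zeta,\f)$ up to the constant (one may even take $C_{\sigma,\zeta}=1$, though absorbing the truncation step may cost a harmless constant), proving \eqref{DL/SULP}. Finally, \eqref{getsulp} is immediate: from \eqref{DL/SULP}, each individual summand satisfies $\e^{\sigma\abs{x-u}^\zeta}\p_u^2(x,\f) \leq \liminf_T \M_u(\sigma,\zeta,\f,T) / C_{\sigma,\zeta}$, and taking square roots and rearranging yields the claimed pointwise bound $\p_u(x,\f)\leq C_{\sigma,\zeta}(\liminf_T\M_u)^{1/2}\e^{-\frac\sigma2\abs{x-u}^\zeta}$.

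The main obstacle I anticipate is making the spectral expansion \eqref{exp_plan} and the passage to \eqref{fatou_plan} rigorous when $\e^{\sigma\abs{X_u}^\zeta}$ is unbounded and the point spectrum may be dense with infinite total multiplicity: one cannot naively swap the double sum, the trace, and the time integral. The clean route is to truncate: replace $\e^{\sigma\abs{X_u}^\zeta}$ by $g_R := \min(\e^{\sigma\abs{X_u}^\zeta}, R)$, for which everything is a bounded operator and all manipulations (spectral expansion of $\f(H)\e^{-itH}\chi_u$ as a Hilbert–Schmidt-valued object, Fubini for the $t$-integral, dominated convergence for $T\to\infty$ on the off-diagonal with the crude bound $|\frac1T\int_0^T\e^{it(E_k-E_l)}\dd t|\le 1$ and absolute summability of $\sum_{k,l}\f(E_k)\f(E_l)\normsch{\chi_u P_{E_k}}\normsch{g_R^{1/2}P_{E_l}\chi_u}$, which is finite since $\f$ has compact support in the pure-point region and $g_R$ is bounded) go through; then let $R\to\infty$ using monotone convergence on the nonnegative diagonal part. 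The off-diagonal contribution requires a small argument that its $\liminf$ (indeed limit) after Cesàro averaging is $\le 0$: split into $|E_k-E_l|>\eta$ (each term $O(1/T\eta)$, and there are finitely many relevant $k,l$ for fixed $R$ by Hilbert–Schmidt summability up to an $\varepsilon$-tail) and $0<|E_k-E_l|\le\eta$ (controlled by the tail of the summable series, uniformly in $T$); this is the routine but slightly delicate part.
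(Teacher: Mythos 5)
Your proposal is correct and follows the same strategy as the paper: expand $\f(H)\e^{-itH}\chi_u$ over eigenspaces, use that $\frac1T\int_0^T\e^{it(E_k-E_l)}\,\dd t\to\delta_{k,l}$ to keep only the nonnegative diagonal contribution, and then pass to the limit via monotone convergence, finishing by bounding the resulting sum below by a single supremal term to get \eqref{getsulp}. The only cosmetic difference is the regularization: you truncate the weight $\e^{\sigma|X_u|^\zeta}\wedge R$, whereas the paper truncates the spatial sum to $\Lambda_L(u)$ and takes $L\to\infty$; these are interchangeable and both render the interchange of the $k,k'$-sums with the time average legitimate.
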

\begin{proof} We first notice that
\begin{equation}\label{lower bnd}
M_u(\sigma,\zeta,\f,T) \geq C_{\sigma,\zeta} \sum_{x\in\z^d} \e^{\sigma\abs{x-u}^{\zeta}} \normsch{\chi_x \ \e^{-itH} \f(H) \chi_u}^2. 
\end{equation}
For $T>0$ and $L\geq 1$ we consider the finite volume time-averaged moment
$$ \M_u^L (\sigma,\zeta,\f,T):=\frac{1}{T}\int_{0}^{T}\sum_{x\in\Lambda_L(u)} \e^{\sigma\abs{x-u}^{\zeta}} \normsch{\chi_x \ 
\e^{-itH} \f(H) \chi_u}^2 \ \dd t. $$
The decomposition of the kernel over the eigenspaces allows us to write
\begin{align}
\M_u^L(\sigma,\zeta,\f,T)=& \sum_{k,k'} \f(E_k)\f(E_{k'}) \sum _{x \in \Lambda_L(u)} \e^{\sigma\abs{x-u}^{\zeta}} \\
& \quad \tr\{\chi_x 
P_{E_k} \chi_u P_{E_k'} \chi_x\}\left(\frac{1}{T}\int_{0}^{T} \e^{-it(E_k -{E_k'})} \ \dd t \right),
\end{align}
and a use of the dominated convergence theorem implies that
$$\lim_{T\to\infty} \M_u^L(\sigma,\zeta,\f,T) = \sum_k \sum _{x \in \Lambda_L(u)} \f^2(E_k) \e^{\sigma\abs{x-u}^{\zeta}} 
\normsch{\chi_x P_{E_k} \chi_u}^2 ,$$
where we have used the fact that
$$\frac{1}{T} \int_{0}^{T} \e^{-it(E_k -E_{k'})} \ \dd t = \begin{cases} 1 & k=k' \\ \frac{\e^{-iT(E_k -E_{k'})}}{-iT(E_k - 
E_{k'})} & k\neq k' \end{cases}.$$
Since $\style{\liminf_{T\to\infty} \M_u(\sigma,\zeta,\f,T) \geq C_{\sigma,\zeta} \lim_{T\to\infty} \M_u^L(\sigma,\zeta,\f,T)}$  
and 
taking the limit when $L\to\infty$, we deduce that
\begin{align}
\liminf_{T\to\infty} \M_u(\sigma,\zeta,\f,T) &\geq C_{\sigma,\zeta} \sum_k \sum _{x\in\z^d} \f^2(E_k) \ \e^{\sigma\abs{x-u}^{\zeta}} 
\normsch{\chi_x P_{E_k} \chi_u}^2 \\
&  \geq C_{\sigma,\zeta} \ \lp_u(\sigma,\zeta,\f).
\end{align} 
As a consequence, \eqref{getsulp} holds.
\end{proof}

 The next statement relates \eqref{SULP} and \eqref{SUDL}.
 \begin{theorem}\label{kernel}
Fix $\sigma>0$ and $\zeta\in (0,1]$ and let $\gamma\in(0,1)$. Then 
 \begin{equation}\label{getsudl}
 \sup_t \normsch{\chi_x \ \e^{-itH} \f(H) \chi_u}\leq C_{\sigma,\zeta,d,\gamma,I} \ \p_u^{1-\gamma}(x,\f) 
\ \lp_u^{\gamma/2}(\sigma,\zeta,\f)
 \end{equation} 
for all $x,u \in\z^d$ and any function $\f\in\mathcal{C}^{\infty} _{0,+}(I)$. 
In particular, 
\begin{equation}
\sup_t \normsch{\chi_x \ \e^{-itH}\f(H) \chi_u} \leq C_{\sigma,\zeta,d,\gamma,I} (\liminf_{T\to\infty} 
\M_u(\sigma,\zeta,\f,T)^{1/2} \e^{-\frac{(1-\gamma)}{2} \ \sigma\abs{x-u}^{\zeta}}. 
\end{equation}
 \end{theorem}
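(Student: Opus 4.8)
The plan is to expand the propagator over the eigenprojections of $H$, reduce the desired bound to an estimate on a series over the eigenvalues in $I$, and then interpolate — the role of the loss $\gamma$ being precisely to make that series converge.

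First I would use that $H$ is pure point on $I\supset\supp\f$ to write $\chi_x\e^{-itH}\f(H)\chi_u=\sum_k\e^{-itE_k}\f(E_k)\,\chi_x P_{E_k}\chi_u$, the series converging in Hilbert--Schmidt norm; the triangle inequality for $\normsch{\cdot}$ then gives, uniformly in $t$,
\[\normsch{\chi_x\e^{-itH}\f(H)\chi_u}\ \le\ S_u(x):=\sum_k\f(E_k)\,\normsch{\chi_x P_{E_k}\chi_u}.\]
It therefore suffices to prove $S_u(x)\le C_{\sigma,\zeta,d,\gamma,I}\,\p_u^{1-\gamma}(x,\f)\,\lp_u^{\gamma/2}(\sigma,\zeta,\f)$, the finiteness of the right-hand side being part of the statement.

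Next, split each summand as $[\f(E_k)\normsch{\chi_x P_{E_k}\chi_u}]^{1-\gamma}\cdot[\f(E_k)\normsch{\chi_x P_{E_k}\chi_u}]^{\gamma}$, bound the first factor by $\p_u^{1-\gamma}(x,\f)$ and pull it out of the sum, leaving the task of proving $\sum_k[\f(E_k)\normsch{\chi_x P_{E_k}\chi_u}]^{\gamma}\le C\,\lp_u^{\gamma/2}$. This is the heart of the matter, and it rests on two inputs. The first is the weighted $\ell^2$ bound built into the definitions, namely for each $k$
\[\f(E_k)^2\sum_{y\in\z^d}\e^{\sigma\abs{y-u}^{\zeta}}\normsch{\chi_y P_{E_k}\chi_u}^2\ \le\ \lp_u(\sigma,\zeta,\f),\]
so that every eigenfunction's weighted correlation profile lies in a fixed weighted $\ell^2$-ball of radius $\lp_u^{1/2}$, which forces $\normsch{\chi_x P_{E_k}\chi_u}$ to be small, uniformly in $k$, as soon as the eigenfunction spreads away from $u$. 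The second is the uniform Hilbert--Schmidt bound $C_I:=\sup_v\normsch{\chi_v\chi_I(H)}<\infty$, standard for Schr\"odinger operators with $I$ bounded, which by pinching ($\sum_k P_{E_k}\chi_y P_{E_k}\le\chi_I(H)$) yields $\sum_k\normsch{\chi_y P_{E_k}\chi_u}^2\le C_I^2$ for every $y$, so that only finitely many eigenfunctions are appreciably large at any fixed site. Estimating $\normsch{\chi_x P_{E_k}\chi_u}^{\gamma}$ against the weighted-$\ell^2$ quantity above by a H\"older inequality in $(k,y)$ and reorganizing the resulting double sum over the spheres $\{y:\abs{y-u}=R\}$ — whose cardinality grows only polynomially in $R$ and is crushed by the weight $\e^{\sigma R^{\zeta}}$ since $\zeta>0$ — produces the required bound with a constant depending only on $\sigma,\zeta,d,\gamma,I$. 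The dimension $d$ enters exactly through this volume count, and $\gamma>0$ is essential: for $\gamma=0$ the series degenerates into a bare sum of suprema over $k$, which need not converge.

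Putting the pieces together gives $\sup_t\normsch{\chi_x\e^{-itH}\f(H)\chi_u}\le C\,\p_u^{1-\gamma}(x,\f)\,\lp_u^{\gamma/2}$. The displayed ``in particular'' then follows because $\lp_u(\sigma,\zeta,\f)\ge\e^{\sigma\abs{x-u}^{\zeta}}\p_u^2(x,\f)$ gives $\p_u^{1-\gamma}(x,\f)\,\lp_u^{\gamma/2}\le\e^{-\frac{1-\gamma}{2}\sigma\abs{x-u}^{\zeta}}\,\lp_u^{1/2}$, while Theorem~\ref{proj} bounds $\lp_u^{1/2}$ by $C_{\sigma,\zeta}\,(\liminf_{T\to\infty}\M_u(\sigma,\zeta,\f,T))^{1/2}$. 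I expect the main obstacle to be the last part of the heart step — controlling the sum over the possibly infinitely many eigenvalues in $I$: the crude estimate retains only an $\ell^2$ control on the summands (through $C_I$) together with the $\ell^\infty$ control $\p_u(x,\f)$, which is insufficient, so one must genuinely use the full spatial profile recorded by $\lp_u$, trading the factor $\p_u^{\gamma}$ for that information.
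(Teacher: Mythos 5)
Your skeleton matches the paper's: expand the propagator over eigenprojectors, pull out $\p_u^{1-\gamma}(x,\f)$, and reduce to showing $\sum_k\bigl[\f(E_k)\normsch{\chi_x P_{E_k}\chi_u}\bigr]^{\gamma}\le C\,\lp_u^{\gamma/2}$. The deduction of the ``in particular'' display from \eqref{getsudl} and Theorem~\ref{proj} is also correct. The gap is precisely in the step you yourself flag as ``the heart of the matter'': the claimed ``H\"older inequality in $(k,y)$ and reorganizing the double sum over spheres'' is not an argument, and as written it does not yield the bound. With only the two inputs you list --- a weighted $\ell^2$ bound in $y$ for each fixed $k$, and an $\ell^2$ bound in $k$ for each fixed $y$ --- there is no variable $y$ present in the quantity $\sum_k a_{k,x}^\gamma$ you are trying to estimate, so there is nothing to H\"olderize over $(k,y)$.

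What is actually needed, and what the paper does, is the following sequence. First factor the Hilbert--Schmidt norm as $\normsch{\chi_x P_{E_k}\chi_u}\le\normsch{\chi_x P_{E_k}}\,\normsch{\chi_u P_{E_k}}$, which separates the $x$- and $u$-dependence. Then the weighted $\ell^2$ input becomes $\f(E_k)\normsch{\chi_u P_{E_k}}\le A_k^{-1/2}(\sigma,\zeta,u)\,\lp_u^{1/2}$ with $A_k$ as in \eqref{A_k}, and a H\"older inequality purely in $k$ with exponents $2/\gamma$ and $2/(2-\gamma)$ reduces the task to bounding $\bigl(\sum_k\normsch{\chi_x P_{E_k}}^2\bigr)^{\gamma/2}\bigl(\sum_k A_k^{-\gamma/(2-\gamma)}\bigr)^{1-\gamma/2}$. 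The first factor is your trace-class bound $C_I$. The second is where the real work lies, and a qualitative statement like ``only finitely many eigenfunctions are appreciably large'' is not enough: one must show that after reordering the $A_k$ grow at least like $\exp(\tilde C_{\sigma,\zeta,d}\,k^{\zeta/d})$, so that $\sum_k A_k^{-\gamma/(2-\gamma)}$ converges for every $\gamma\in(0,1)$. That quantitative counting statement is Lemma~\ref{order}, whose proof is a double-counting argument comparing $\sum_x a_{kx}(u)=1$ and $\sum_k a_{kx}(u)\le1$ against the volume $|\Lambda_L(u)|\sim L^d$ versus the weight $\e^{\sigma L^\zeta}$ --- this is where your ``polynomial spheres crushed by the exponential weight'' intuition belongs, but it must be turned into a bound on the number of eigenvalues with small $A_k$, not merely invoked. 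Without the factorization and without the counting lemma, your proof as written does not close.
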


This relies partly on the following lemma which provides a bound on the number of elements contained in a box of size $L$. Its proof is 
given in Appendix A.

\begin{lemma}\label{order} Fix $\sigma>0$ and $\zeta\in (0,1]$. For $k\in\z$ and $u\in\z^d$, we set   
\begin{equation}\label{A_k}
A_k(\sigma,\zeta,u):=\sum_x \e^{\sigma\abs{x-u}^{\zeta}} \frac{\normsch{\chi_x P_{E_k} \chi_u}^2}
{\normsch{\chi_u P_{E_k}} ^2}.
\end{equation} 
 Then 
\begin{equation}\label{N}
 N_{L,\zeta,\sigma,u} :=\sharp \lbrace k\in\z; E_k \in \ I, \ A_k(\sigma,\zeta,u)\leq L\rbrace \leq C_{\sigma,\zeta,d} \ 
(\log L)^{d/{\zeta}}
 \ \ for \ all \ L\in \N,
\end{equation}
where $C_{\sigma,\zeta,u}$ is a positive constant uniform in $u\in\z^d$.\\
In other terms, with new constant we can order $A_k(\sigma,\zeta,u)$ increasingly so that $ A_k(\sigma,\zeta,u)
\geq  \exp(\tilde{C}_{\sigma,\zeta,d} \  k^{{\zeta}/d})$.
 \end{lemma}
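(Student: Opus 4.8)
\emph{Proof proposal.} I would first reduce the statement to the counting bound $N_{L,\zeta,\sigma,u}\le C_{\sigma,\zeta,d}\,(\log L)^{d/\zeta}$, uniform in $u$: sorting the values $A_k(\sigma,\zeta,u)$ increasingly and applying this with $L=A_k(\sigma,\zeta,u)$ immediately yields $A_k(\sigma,\zeta,u)\ge\exp(\tilde C_{\sigma,\zeta,d}\,k^{\zeta/d})$. The plan for the counting bound is a concentration estimate followed by a volume count. Since $\sum_{x\in\z^d}\chi_x=1$, one has the identity $\sum_x\normsch{\chi_x P_{E_k}\chi_u}^2=\tr\{\chi_u P_{E_k}\chi_u\}=\normsch{\chi_u P_{E_k}}^2$, so in particular $A_k(\sigma,\zeta,u)\ge1$, and the Markov inequality applied to the weight $\e^{\sigma|x-u|^\zeta}$ gives, for every $R\ge1$, $\sum_{|x-u|>R}\normsch{\chi_x P_{E_k}\chi_u}^2\le A_k(\sigma,\zeta,u)\,\e^{-\sigma R^\zeta}\normsch{\chi_u P_{E_k}}^2$. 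Choosing $R_L:=(\sigma^{-1}\log 2L)^{1/\zeta}$ (so that $L\e^{-\sigma R_L^\zeta}=\tfrac12$), and writing $B_L$ for the set of $x\in\z^d$ with $|x-u|\le R_L$ and $\chi_{B_L}:=\sum_{x\in B_L}\chi_x$, it follows that every $k$ with $A_k(\sigma,\zeta,u)\le L$ satisfies $\normsch{\chi_{B_L}P_{E_k}\chi_u}^2=\sum_{x\in B_L}\normsch{\chi_x P_{E_k}\chi_u}^2\ge\tfrac12\normsch{\chi_u P_{E_k}}^2>0$.

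Next I would extract, for each such $k$, one genuine unit eigenvector of $H$ that is itself concentrated in $B_L$. Take a singular value decomposition $P_{E_k}\chi_u=\sum_j s_j\,|v_j\rangle\langle w_j|$ with $s_j>0$, $\{v_j\}\subset\mathrm{Ran}\,P_{E_k}$ orthonormal and $\{w_j\}\subset\chi_u\h$ orthonormal. Then $\sum_j s_j^2=\normsch{\chi_u P_{E_k}}^2$ and $\normsch{\chi_{B_L}P_{E_k}\chi_u}^2=\sum_j s_j^2\,\norm{\chi_{B_L}v_j}^2$, so the estimate above says that the $s_j^2$-weighted average of $\norm{\chi_{B_L}v_j}^2$ is at least $\tfrac12$; hence there is an index $j_k$ with $\norm{\chi_{B_L}v_{j_k}}^2\ge\tfrac12$. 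I set $\hat\phi_k:=v_{j_k}$, a unit eigenvector of $H$ with eigenvalue $E_k\in I$ that is $\tfrac12$-concentrated in $B_L$. (In the lattice case $\chi_u=|\delta_u\rangle\langle\delta_u|$ is rank one, the decomposition has a single term and $\hat\phi_k=P_{E_k}\delta_u/\norm{P_{E_k}\delta_u}$, for which in fact $\sum_x\e^{\sigma|x-u|^\zeta}\norm{\chi_x\hat\phi_k}^2=A_k(\sigma,\zeta,u)$ exactly.)

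Finally, let $K_L:=\{k:E_k\in I,\ A_k(\sigma,\zeta,u)\le L\}$. The vectors $\{\hat\phi_k\}_{k\in K_L}$ are orthonormal, since they belong to the mutually orthogonal ranges $\mathrm{Ran}\,P_{E_k}$ of distinct eigenvalues in $I$; thus the orthogonal projection $\Pi_L$ onto their span has rank $N_{L,\zeta,\sigma,u}$ and $\Pi_L\le\mathbf 1_I(H)$. Therefore
\[
\tfrac12\,N_{L,\zeta,\sigma,u}\ \le\ \sum_{k\in K_L}\norm{\chi_{B_L}\hat\phi_k}^2\ =\ \tr\{\Pi_L\,\chi_{B_L}\}\ \le\ \tr\{\chi_{B_L}\,\mathbf 1_I(H)\,\chi_{B_L}\}.
\]
It then remains to bound the last trace by $C_{I,d}\,\#B_L\asymp(\log L)^{d/\zeta}$. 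In the discrete case this is immediate from $\mathbf 1_I(H)\le1$, which gives $\tr\{\chi_{B_L}\mathbf 1_I(H)\chi_{B_L}\}\le\tr\chi_{B_L}=\#B_L\le C_d R_L^d$. For $\h=\mathrm L^2(\R^d)$ I would use instead the heat-kernel Hilbert--Schmidt estimate $\normsch{\chi_{B_L}\e^{-H}}^2\le C_d\,\#B_L$ (valid for $H=-\Delta+V$ with $V$ bounded below) together with $\norm{\e^{H}\mathbf 1_I(H)}\le\e^{\sup I}$, since $\chi_{B_L}\mathbf 1_I(H)=(\chi_{B_L}\e^{-H})(\e^{H}\mathbf 1_I(H))$.

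I expect the only genuinely structural point to be this last estimate: the sharp exponent $d/\zeta$ forces one to control $\tr\{\chi_{B_L}\mathbf 1_I(H)\chi_{B_L}\}$ by the \emph{volume} $\#B_L$, which is free in $\ell^2(\z^d)$ but in $\mathrm L^2(\R^d)$ relies on the Schr\"odinger structure of $H$ (a softer hypothesis such as merely \eqref{sumalphaH} would only give the weaker exponent $2\kappa/\zeta$). The extraction of a localized eigenvector in the second step, by contrast, is a soft consequence of the singular value decomposition and is insensitive to the multiplicity of $E_k$. One should keep track of the harmless geometric constants relating the Euclidean balls $\{|x-u|\le R_L\}$ to lattice cubes, and note that the argument runs verbatim on a graph of subexponential growth, $\#B_L$ then being estimated via \eqref{volcnd}.
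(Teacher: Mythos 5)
Your proof is correct and follows the same two-step strategy as the paper: a Markov concentration estimate shrinking the relevant region to a box of radius comparable to $(\log L)^{1/\zeta}$, followed by a volume count via orthogonality of the eigenspaces. The implementation differs in one respect. You extract, via the singular value decomposition of $P_{E_k}\chi_u$, an explicit unit eigenvector $\hat\phi_k\in\mathrm{Ran}\,P_{E_k}$ half-concentrated in $B_L$, and use orthonormality of $\{\hat\phi_k\}_{k\in K_L}$ to get $\frac12 N_L\leq\tr\{\chi_{B_L}P_I\chi_{B_L}\}$. The paper avoids eigenvector extraction altogether: it works with the normalized weights $a_{kx}(u):=\normsch{\chi_x P_{E_k}\chi_u}^2/\normsch{\chi_u P_{E_k}}^2$, which satisfy $\sum_x a_{kx}(u)=1$ and $\sum_{k:\,E_k\in I}a_{kx}(u)\leq\sum_k\normsch{\chi_x P_{E_k}}^2=\tr(\chi_x P_I\chi_x)$ (using $\normsch{\chi_x P_{E_k}\chi_u}\leq\normsch{\chi_x P_{E_k}}\,\normsch{\chi_u P_{E_k}}$), and obtains the same count by the Fubini exchange $\frac12\#J_u(L)\leq\sum_{x\in\Lambda_L(u)}\sum_k a_{kx}(u)\leq C_d L^d$. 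Orthogonality of the eigenspaces enters there through $\sum_k\normsch{\chi_x P_{E_k}}^2=\tr(\chi_x P_I\chi_x)$ rather than through an explicit orthonormal system; your SVD step is a slightly heavier route to the same inequality, though it makes the insensitivity to the multiplicity of $E_k$ more visible. You are also right to single out the trace bound $\tr\{\chi_{B_L}P_I\chi_{B_L}\}\leq C\,\#B_L$ as the genuinely structural point: the paper's line ``$\tr(\chi_x P_I\chi_x)\leq 1$'' is literal in $\ell^2(\z^d)$, while in $\mathrm{L}^2(\R^d)$ it relies on the Schr\"odinger form of $H$, which the paper leaves implicit and you spell out.
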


\begin{proof}[Proof of Theorem~\ref{kernel}.]
Write
\begin{align}
\sup_t \normsch{\chi_x \e^{-itH} \f(H) \chi_u}
&\leq\sum_{k; E_k \in \ I} \f(E_k)\normsch{\chi_x P_{E_k} \chi_u}\notag\\ 
&\leq \p_u^{1-\gamma}(x,\f) \sum_{k; E_k \in \ I} \f^{\gamma}(E_k)\normsch{\chi_x P_{E_k}}^{\gamma} \normsch{\chi_u P_{E_k}}^{\gamma}. 
\notag 
\end{align}
As in \cite{T}, we shall sacrify some decay in  space in order to recover the summability over $k$. Given $\sigma>0, \zeta\in (0,1]$, one has
$$\f^2(E_k)\normsch{\chi_u P_{E_k}}^2 A_k(\sigma,\zeta,u)=\sum_x \e^{\sigma\abs{x-u}^{\zeta}} 
\f^2(E_k) \normsch{\chi_x P_{E_k} \chi_u}^2\leq \lp_u(\sigma,\zeta,\f).$$
Thus 
\begin{equation}\label{normPk}
\f(E_k)\normsch{\chi_u P_{E_k}}\leq A_k ^{-1/2} (\sigma,\zeta,u) \lp_u ^{1/2}(\sigma,\zeta,\f),
\end{equation}
and 
\begin{equation}\sup_t \normsch{\chi_x \e^{-itH} \f(H) \chi_u}\leq \p_u^{1-\gamma}(x,\f) \lp_u ^{\gamma/2}(\sigma,\zeta,\f) 
\sum_{k; E_k \in \ I}\normsch{\chi_x P_{E_k}}^{\gamma} A_k ^{-\gamma/2} (\sigma,\zeta,u).
\end{equation}

We use need Lemma~\ref{order} to guarantee summability in $k$. Together with H\"{o}lder inequality with conjugate exponents 
$s=2/{\gamma}$ and $s'=\frac{2}{2-\gamma}$, we get
\begin{align}
\sum_{k; E_k \in \ I}\normsch{\chi_x P_{E_k}}^{\gamma} A_k ^{-\gamma/2} (\sigma,\zeta,u)
&\leq \left( \sum_{k; E_k \in \ I} \normsch{\chi_x P_{E_k}}^2\right)^{\gamma/2} \left( \sum_{k; E_k \in \ I}
A_k ^{\frac{-\gamma}{2-\gamma}} (\sigma,\zeta,u)\right) ^{(1-\gamma/2)}\label{geom cnd} \\
&=\ C_{\sigma,\zeta,d,\gamma,I} < \infty.\notag
\end{align}
Hence
\begin{equation}
\sup_t \normsch{\chi_x \ \e^{-itH} \f(H) \chi_u}\leq C_{\sigma,\zeta,d,\gamma,I} \ \p_u ^{1-\gamma}(x,\f)
\lp_u^{\gamma/2}(\sigma,\zeta,\f).
\end{equation}
\end{proof}

\begin{proof}[Proof of Theorem~\ref{equiv1}.] 
\eqref{getsulp} shows that $(i) \Rightarrow (v)$, and \eqref{getsudl} that $(v) \Rightarrow (iv)$. 
 \end{proof}

\subsection{SULE, SUDEC}\label{type2}
We now focus now on the second kind of criteria and we start with 
the proof of Theorem~\ref{thmequiv}. It is a consequence of the 
theorem below which is the main technical result of this section. 
We may omit  the index $n$ and write $\phi\in\G_\E$ instead of $\phi_n\in\G_\E$.

Similary to \eqref{suleH} and \eqref{sudecH}, we shall say that $H$ verifies $(\mathrm{SULE}_f)$/$(\mathrm{SUDEC}_f)$ for some 
function $f$ if these estimates are respectively replaced by 

\begin{equation}\label{suleHf}
 \norm{\chi_x \phi}\leq C_{\zeta,\sigma,\eps} f(\alpha_{\phi}) \ \e^{\eps \abs{x_{\phi}}^{\zeta}} 
\e^{-\sigma\abs{x-x_{\phi}}^{\zeta}} 
\end{equation}
and
\begin{equation}\label{sudecHf}
\norm{\chi_x \phi} \norm{\chi_u \phi} \leq C_{\zeta,\sigma,\eps} \ f(\alpha_{\phi}) \ \e^{\eps \abs{u}^{\zeta}} 
\e^{-\sigma\abs{x-u}^{\zeta}}. 
\end{equation}

\begin{theorem}\label{techthme}
Let $\G_\E=\{\phi_n\}_{n\ge 1}$ be an orthonormal basis of $\h_{\E}$ . Then the following properties are equivalent:
\\
(i) there exists a nonnegative function such that for any $\eps>0$, $f(s)\leq C_{\eps}\ \e^{\eps \ s^{-\zeta/ 2\kappa}}$
 for all $0<s \leq 1$ and for which $H$ has $(\mathrm{SUDEC}_f)$ on  $\G_{\E}$.\\
(ii) there exist $\sigma>0, \zeta\in (0,1]$ such that for any $\epsilon>0$ 
\begin{equation}
\norm{\chi_x \phi} \norm{\chi_u \phi} \leq C_{\sigma,\zeta,\eps} \ \e^{\eps \abs{u}^{\zeta}} \e^{-\sigma\abs{x-u}^{\zeta}},
\end{equation}
for all $\phi\in \G_{\E}$ and all $x,u \in \z^d$.\\
(iii) $H$ exhibits (SUDEC) on $\G_{\E}$.\\
(iv) For any nonnegative function such that for any $\eps>0$, $f(s)\geq C_{\eps} \ \e^{-\eps \ s^{-\zeta/ 2\kappa}}$ for all $0<s
 \leq 1$, $H$ has $(\mathrm{SUDEC}_f)$ on  $\G_{\E}$.
\end{theorem}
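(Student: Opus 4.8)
The plan is to prove the cycle of implications $(i)\Rightarrow(ii)\Rightarrow(iii)\Rightarrow(iv)\Rightarrow(i)$, with the genuine content concentrated in the two arrows that trade the weight $f(\alpha_\phi)$ against the subexponential factor $\e^{\eps|u|^\zeta}$. The implications $(iii)\Rightarrow(i)$ (take $f(s)=s$, which satisfies $s\le C_\eps \e^{\eps s^{-\zeta/2\kappa}}$) and $(iv)\Rightarrow(iii)$ are essentially trivial: for the latter one just notes that the constant function $f\equiv 1$ satisfies $f(s)\ge C_\eps \e^{-\eps s^{-\zeta/2\kappa}}$, so $(\mathrm{SUDEC}_1)$ holds, which is exactly property $(ii)$, and then $(ii)\Rightarrow(iii)$ must be argued anyway. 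So the crux is the pair $(i)\Rightarrow(ii)$ and $(ii)\Rightarrow(iii)$ (equivalently $(ii)\Rightarrow(iv)$), i.e.\ showing that a \emph{tiny} power-loss prefactor $f(\alpha_\phi)$ of subexponential type in $\alpha_\phi^{-1}$ can be absorbed into (resp.\ manufactured from) the $\e^{\eps|u|^\zeta}$-slack.

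The mechanism is the following. Suppose first $(i)$, so $(\mathrm{SUDEC}_f)$ holds with $f(s)\le C_\eps\e^{\eps s^{-\zeta/2\kappa}}$. Given a target $\eps>0$, I want to replace $f(\alpha_\phi)$ by a constant at the price of $\e^{\eps|u|^\zeta}$. The key quantitative input is the crude a priori decay estimate on $\alpha_\phi$: since $\alpha_\phi=\|T^{-1}\phi\|^2$ with $T(x)=\scal{x}^\kappa$ and $\phi$ normalized, if $\phi$ has a center of localization $x_\phi$ (obtained, as in Theorem~\ref{thmequiv}, by localizing most of the mass of $\phi$), then $\alpha_\phi\ge c\,\scal{x_\phi}^{-2\kappa}$, hence $\alpha_\phi^{-\zeta/2\kappa}\le C\scal{x_\phi}^{\zeta}$. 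Therefore $f(\alpha_\phi)\le C_\eps\e^{\eps'\scal{x_\phi}^\zeta}$ for any $\eps'>0$. It then remains to convert $\e^{\eps'|x_\phi|^\zeta}$ into $\e^{\eps|u|^\zeta}$, which is a routine triangle-inequality argument using $|x_\phi|^\zeta \le C(|u|^\zeta + |x-u|^\zeta + |x-x_\phi|^\zeta)$ together with the $\e^{-\sigma|x-u|^\zeta}$ decay already in hand and an analogous decay in $|x-x_\phi|$ (which one first extracts from $(\mathrm{SUDEC}_f)$ by choosing $x$ near $x_\phi$). Shrinking $\sigma$ slightly absorbs the cross terms. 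This yields $(ii)$. The reverse direction $(ii)\Rightarrow(iv)$ runs the same estimate backwards: from $(ii)$ (no prefactor, with some $\sigma$), one has $\|\chi_x\phi\|\,\|\chi_u\phi\|\le C_\eps\e^{\eps|u|^\zeta}\e^{-\sigma|x-u|^\zeta}$, and since $f(\alpha_\phi)\ge C_\eps\e^{-\eps\alpha_\phi^{-\zeta/2\kappa}}\ge C_\eps\e^{-\eps'\scal{x_\phi}^\zeta}$, one needs to produce $\e^{-\eps'|x_\phi|^\zeta}$ as an \emph{upper} bound cost, i.e.\ one multiplies and divides: $\|\chi_x\phi\|\,\|\chi_u\phi\|\le [C_\eps\e^{-\eps'|x_\phi|^\zeta}]\cdot[\e^{+\eps'|x_\phi|^\zeta}\cdot(\text{rest})]$ and re-expands $\e^{\eps'|x_\phi|^\zeta}$ via the triangle inequality into $\e^{\eps''|u|^\zeta}\e^{(\sigma/2)|x-u|^\zeta}$-type factors, eating the loss into $\sigma$. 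Hence $f(\alpha_\phi)$-SUDEC holds with a smaller $\sigma$ and larger constants.

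For $(ii)\Rightarrow(iii)$ specifically (producing the $\alpha_\phi$ prefactor): from $(ii)$ one has, choosing $x=u$, $\|\chi_u\phi\|^2\le C_\eps\e^{\eps|u|^\zeta}$; summing against $T^{-2}$, $\alpha_\phi=\sum_u \scal{u}^{-2\kappa}\|\chi_u\phi\|^2$, so one cannot directly bound $\|\chi_u\phi\|$ by $\sqrt{\alpha_\phi}$ pointwise. Instead one argues through the center: $(ii)$ gives (after the triangle-inequality manipulation above, localizing near the site $x_\phi$ where $\|\chi_{x_\phi}\phi\|$ is maximal) a SULE-type bound $\|\chi_x\phi\|\le C_\eps\e^{\eps|x_\phi|^\zeta}\e^{-\sigma|x-x_\phi|^\zeta}$, and then $\alpha_\phi\ge c\scal{x_\phi}^{-2\kappa}$ as above, i.e.\ $1\le C\alpha_\phi\scal{x_\phi}^{2\kappa}\le C_\eps\,\alpha_\phi\,\e^{\eps|x_\phi|^\zeta}$, so one can insert a factor $\alpha_\phi$ at the cost of an extra $\e^{\eps|x_\phi|^\zeta}$, which is then re-expanded into $\e^{\eps'|u|^\zeta}$ and harmless $|x-u|^\zeta$ terms. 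This gives $(\mathrm{SUDEC})$ on $\G_\E$, i.e.\ $(iii)$.

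The main obstacle I anticipate is purely bookkeeping rather than conceptual: one must be careful that the center of localization $x_\phi$ is produced with quantitative control \emph{only} from the hypothesis at hand (not circularly from the conclusion), and that each passage $|x_\phi|^\zeta\rightsquigarrow |u|^\zeta + |x-u|^\zeta$ is done with the $\eps$'s and $\sigma$'s tracked so that after finitely many such steps the final $\eps$ is arbitrary and the final $\sigma$ is still positive. Concretely, the delicate point is the lower bound $\alpha_\phi\gtrsim\scal{x_\phi}^{-2\kappa}$: one obtains it by noting that a definite fraction of $\|\phi\|^2=1$ sits within a bounded distance of $x_\phi$ — this requires first establishing, from the relevant SUDEC-type hypothesis, that $\sum_{|x-x_\phi|\le R}\|\chi_x\phi\|^2\ge 1/2$ for $R$ a large enough absolute constant (possible because the tail is subexponentially small uniformly), and then $\alpha_\phi=\sum_x\scal{x}^{-2\kappa}\|\chi_x\phi\|^2\ge c\scal{x_\phi+R}^{-2\kappa}\ge c'\scal{x_\phi}^{-2\kappa}$. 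This is exactly the type of argument underlying the "$|x_{\phi_n}|\ge Cn^{1/2\kappa}$" ordering statement of Theorem~\ref{thmequiv}, and I would factor it out as a short preliminary lemma and then feed it into the four implications above.
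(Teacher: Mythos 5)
Your proposal is correct and follows essentially the same route as the paper: the two key ingredients you single out — locating a center $x_\phi$ as a maximizer of $\|\chi_x\phi\|$ (the paper's Lemma~\ref{lemf}, which turns $(\mathrm{SUDEC}_f)$ into $(\mathrm{SULE}_{f/\sqrt{s}})$) and the lower bound $\alpha_\phi\gtrsim\scal{x_\phi}^{-2\kappa}$ via the concentration of mass in a ball of radius $R_\phi$ around $x_\phi$ (the paper's Lemma~\ref{lemalphax}) — are exactly the paper's two preliminary lemmas, and your subsequent triangle-inequality bookkeeping trading $|x_\phi|^\zeta$ against $|u|^\zeta$ and $|x-u|^\zeta$ is the same manipulation the paper borrows from \cite{G}. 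The only difference is cosmetic: you split the hard direction into $(i)\Rightarrow(ii)\Rightarrow(iv)$ rather than proving $(i)\Rightarrow(iv)$ directly, but both steps use the same estimates.
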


Recall $\alpha_\phi\le 1$. Obviously, $(iv)\Rightarrow(iii)\Rightarrow(ii)\Rightarrow(i)$. It remains to prove that $(i)\Rightarrow(iv)$. 
This will be a consequence of the next two lemmas.

 \begin{lemma}\label{lemf}
Let $f:\R^+ \to\R^+$ be a function. If there exist $\zeta\in (0,1]$ and $\sigma>0$ such that for all $\eps>0$,
\begin{equation}\label{sudecf}
\norm{\chi_x\phi}\norm{\chi_u\phi} \leq C_{\zeta,\sigma,\eps} f(\alpha_{\phi}) \ \e^{\eps|u|^\zeta}\  \e^{-\sigma|x-y|^\zeta} ,
\end{equation}
for all $x,u\in\z^d$ and any $\phi\in\G_\E$
then there is a new constant $C_{\zeta,\sigma,\eps}$ so that for all $x\in\z^d$, we have
\begin{equation}\label{sulef}
\norm{\chi_x\phi}\leq C_{\zeta,\sigma,\eps}
\frac1{\sqrt{\alpha_{\phi}}}f(\alpha_{\phi}) \ \e^{\eps|x_{\phi}|^\zeta}
\  \e^{-\sigma|x-x_{\phi}|^\zeta},
\end{equation}
where $x_\phi$ maximizes $x\mapsto\norm{\chi_x \phi}$.

In particular, taking $f(s)=s$ says that if (SUDEC) holds on $\G_\E$ then (SULE) holds on $\G_\E$ and with the same parameters 
$\zeta$ and $\sigma$.

\end{lemma}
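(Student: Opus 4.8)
The plan is to prove Lemma~\ref{lemf}, namely that a $(\mathrm{SUDEC}_f)$ bound on $\G_\E$ upgrades to a $(\mathrm{SULE}_f)$ bound with an extra factor $1/\sqrt{\alpha_\phi}$. The idea is elementary: the hypothesis \eqref{sudecf} controls the \emph{product} $\norm{\chi_x\phi}\norm{\chi_u\phi}$, and to extract a bound on the single factor $\norm{\chi_x\phi}$ one should divide by $\norm{\chi_u\phi}$ for a well-chosen $u$; the natural choice is $u=x_\phi$, the site maximizing $x\mapsto\norm{\chi_x\phi}$. So the first step is simply to record that $x_\phi$ exists (the map $x\mapsto\norm{\chi_x\phi}$ is $\ell^2$-summable in $x$ since $\sum_x\norm{\chi_x\phi}^2=\norm{\phi}^2=1$, hence attains its supremum).

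Next I would produce a lower bound on $\norm{\chi_{x_\phi}\phi}$ in terms of $\alpha_\phi$. Since $\alpha_\phi=\norm{T^{-1}\phi}^2=\sum_x\norm{T^{-1}\chi_x\phi}^2$ with $T(x)=\scal{x}^\kappa$, and on the unit box around each lattice point $T^{-1}$ is comparable to $\scal{x}^{-\kappa}$, we get $\alpha_\phi\le C\sum_x\scal{x}^{-2\kappa}\norm{\chi_x\phi}^2$. Using $\norm{\chi_x\phi}\le\norm{\chi_{x_\phi}\phi}$ for every $x$, this becomes $\alpha_\phi\le C\norm{\chi_{x_\phi}\phi}\sum_x\scal{x}^{-2\kappa}\norm{\chi_x\phi}\le C'\norm{\chi_{x_\phi}\phi}\bigl(\sum_x\scal{x}^{-2\kappa}\bigr)^{1/2}\bigl(\sum_x\norm{\chi_x\phi}^2\bigr)^{1/2}$ by Cauchy--Schwarz, and since $\kappa>d/2$ the sum $\sum_x\scal{x}^{-2\kappa}$ converges and $\sum_x\norm{\chi_x\phi}^2=1$. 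Hence $\norm{\chi_{x_\phi}\phi}\ge c\,\alpha_\phi$ for a dimensional constant $c>0$. (Alternatively, and a touch more directly: $\norm{\chi_{x_\phi}\phi}\ge\norm{\chi_{x_\phi}\phi}^2$ since $\norm{\chi_{x_\phi}\phi}\le1$, and one bounds $\alpha_\phi$ against $\norm{\chi_{x_\phi}\phi}$ the same way — the precise route doesn't matter.)

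Then I combine. Apply \eqref{sudecf} with $u$ replaced by $x_\phi$:
\begin{equation}
\norm{\chi_x\phi}\,\norm{\chi_{x_\phi}\phi}\le C_{\zeta,\sigma,\eps}\,f(\alpha_\phi)\,\e^{\eps|x_\phi|^\zeta}\,\e^{-\sigma|x-x_\phi|^\zeta}.
\end{equation}
Divide both sides by $\norm{\chi_{x_\phi}\phi}\ge c\,\alpha_\phi$ to obtain
\begin{equation}
\norm{\chi_x\phi}\le \frac{C_{\zeta,\sigma,\eps}}{c}\,\frac{f(\alpha_\phi)}{\alpha_\phi}\,\e^{\eps|x_\phi|^\zeta}\,\e^{-\sigma|x-x_\phi|^\zeta}.
\end{equation}
This is \eqref{sulef} up to writing $f(\alpha_\phi)/\alpha_\phi=f(\alpha_\phi)/\sqrt{\alpha_\phi}\cdot 1/\sqrt{\alpha_\phi}$; in fact we obtain the slightly stronger statement with $1/\alpha_\phi$ rather than $1/\sqrt{\alpha_\phi}$, but since $\alpha_\phi\le1$ the stated form follows a fortiori (and the weaker form is all that is needed downstream). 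Finally, the parenthetical remark is immediate: taking $f(s)=s$, the factor $f(\alpha_\phi)/\alpha_\phi$ is $\equiv1$, so (SUDEC) on $\G_\E$ yields (SULE) on $\G_\E$ with the very same $\zeta$ and $\sigma$, the center of localization being $x_\phi$.

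There is no real obstacle here; the only point requiring a little care is the lower bound $\norm{\chi_{x_\phi}\phi}\gtrsim\alpha_\phi$, which is where the hypothesis $\kappa>d/2$ (equivalently the convergence of $\sum_x\scal{x}^{-2\kappa}$) enters and where one must be careful that the constant is uniform in $\phi$ — it is, being purely dimensional. Everything else is a one-line substitution and the triangle-type bookkeeping of the subexponential weights, which behaves well because $x_\phi$ plays the role of the reference point $u$ in \eqref{sudecf}.
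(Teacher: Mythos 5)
Your overall strategy is the same as the paper's: pick $x_\phi$ maximizing $x\mapsto\norm{\chi_x\phi}$, derive a lower bound on $\norm{\chi_{x_\phi}\phi}$ in terms of $\alpha_\phi$, then plug $u=x_\phi$ into \eqref{sudecf} and divide. However, there is a genuine error in the step where you lower-bound $\norm{\chi_{x_\phi}\phi}$, and it propagates to a false ``a fortiori'' claim at the end.

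Starting from $\alpha_\phi\le C\sum_x\scal{x}^{-2\kappa}\norm{\chi_x\phi}^2$, you pull out only \emph{one} factor of $\norm{\chi_x\phi}\le\norm{\chi_{x_\phi}\phi}$ and then invoke Cauchy--Schwarz, landing on $\norm{\chi_{x_\phi}\phi}\ge c\,\alpha_\phi$. But you should pull out the full square: $\norm{\chi_x\phi}^2\le\norm{\chi_{x_\phi}\phi}^2$ gives
\begin{equation*}
\alpha_\phi\ \le\ C\,\norm{\chi_{x_\phi}\phi}^2\sum_x\scal{x}^{-2\kappa}\ =\ C_d'\,\norm{\chi_{x_\phi}\phi}^2,
\end{equation*}
i.e.\ $\norm{\chi_{x_\phi}\phi}\ge c\,\sqrt{\alpha_\phi}$ directly, with no Cauchy--Schwarz needed (this is exactly what the paper does, phrased there via the normalization $\tilde\phi=\phi/\sqrt{\alpha_\phi}$). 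Your version is a full square root short.

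The consequence is that dividing \eqref{sudecf} at $u=x_\phi$ by your lower bound produces $\norm{\chi_x\phi}\lesssim\frac{f(\alpha_\phi)}{\alpha_\phi}\,\e^{\eps|x_\phi|^\zeta}\e^{-\sigma|x-x_\phi|^\zeta}$, and your assertion that this is ``slightly stronger'' than the target \eqref{sulef} ``since $\alpha_\phi\le1$'' is backwards: $\alpha_\phi\le1$ gives $1/\alpha_\phi\ge1/\sqrt{\alpha_\phi}$, so the right-hand side with $1/\alpha_\phi$ is \emph{larger}, hence the bound is \emph{weaker} and does not imply \eqref{sulef}. (It happens not to affect the ``in particular'' case $f(s)=s$, where both yield a bounded prefactor, but the lemma as stated is needed at the sharper rate $f(\alpha_\phi)/\sqrt{\alpha_\phi}$ in the proof of Theorem~\ref{techthme}.) With the one-line fix above, the rest of your argument -- substituting $u=x_\phi$, dividing, and noting the constant is purely dimensional -- goes through and coincides with the paper's proof.
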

This lemma tells us that if $\mathrm{(SUDEC)}_f$ holds for a given function $f$ then $\mathrm{(SULE)}_g$ occurs where 
$g:s \mapsto \frac{f(s)}{\sqrt s}$.
\begin{proof}
We set $\tilde{\phi} = \frac1{\sqrt{\alpha_\phi}}\phi = \phi / \|T^{-1}\phi\|$ and we pick $x_\phi \in \z^d$ (not unique) such that 
\begin{equation}\|\chi_{x_\phi}\tilde{\phi}\|=\max_{u \in \z^d} \|\chi_u\tilde{\phi}\|.\end{equation}
Since
\begin{align}
1=\|T^{-1} \tilde{\phi}\|^2 = 
\sum_{u \in \z^d} \|\chi_u T^{-1}\tilde{\phi}\|^2 
\leq  \|\chi_{x_\phi}\tilde{\phi}\|^2 \sum_{u\in\z^d} \norm{\chi_u T^{-1}}^2 \leq C_d \|\chi_{x_\phi}\tilde{\phi}\|^2 ,
\end{align}
 we get
\begin{equation}
\|\chi_{x_\phi}\tilde{\phi}\| \geq C_d^{-1/2}.
\end{equation}
It follows now from \eqref{sudecf} that 

\begin{align}
\|\chi_x{\phi}\|
& \leq  C_d^{1/2}\frac{1}{\sqrt\alpha_\phi}  \|\chi_x{\phi}\| \|\chi_{x_\phi}{\phi}\| \notag \\
& \leq  C_{d,\zeta,\sigma,\epsilon} \frac{f(\alpha_\phi)}{\sqrt\alpha_\phi} \ \e^{\epsilon|x_\phi|^\zeta}
  \e^{-\sigma|x-x_\phi|^\zeta},
\end{align}
for all $x\in\z^d$.
\end{proof}
Furthermore, we establish a control on $\alpha_{\phi}$ in term of the center of localization $x_{\phi}$ according to:
\begin{lemma}\label{lemalphax}
Suppose that $(\mathrm{SULE}_f)$ holds with some function $f:\R^+ \to\R^+$
such that for all $\eps>0$,
\begin{equation} f(s)\leq C_{\eps} \ \e^{\eps s^{-\zeta/2\kappa}} \mathrm{for \ all} \ s \in ]0,1].
\end{equation}
Then there exists a constant $C>0$ (independant of $\G_{\E}$), so that
\begin{align}\label{bnd alpha phi}
 \alpha_{\phi} \geq C \scal{x_{\phi}}^{-2\kappa}  \quad \text{for all $\phi\in\G_{\E}$}. 
\end{align}

\end{lemma}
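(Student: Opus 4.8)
The plan is to use the $(\mathrm{SULE}_f)$ bound to estimate $\alpha_\phi = \|T^{-1}\phi\|^2$ from below by a direct summation over the lattice, exploiting the fact that the decay of $\|\chi_x\phi\|$ is centered at $x_\phi$. First I would write $\alpha_\phi = \sum_{x\in\z^d}\|\chi_x T^{-1}\phi\|^2$ and observe that on the unit box around $x$ the multiplication operator $T^{-1}=\scal{\cdot}^{-\kappa}$ is bounded below by $C\scal{x}^{-\kappa}$ (since $\scal{y}\le C\scal{x}$ for $|y-x|\le 1$, uniformly in $x$), so that $\|\chi_x T^{-1}\phi\| \ge C\scal{x}^{-\kappa}\|\chi_x\phi\|$. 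Hence
\begin{equation*}
\alpha_\phi \ge C^2 \sum_{x\in\z^d}\scal{x}^{-2\kappa}\|\chi_x\phi\|^2 .
\end{equation*}
Now I would restrict the sum to the unit box $x=x_\phi$ (or a bounded neighbourhood of it), giving $\alpha_\phi \ge C^2 \scal{x_\phi}^{-2\kappa}\|\chi_{x_\phi}\phi\|^2$. The point of the sum is therefore reduced to showing $\|\chi_{x_\phi}\phi\|$ is bounded below by a constant, which is not automatic since $\phi$ is merely normalized in $\h$, not normalized after applying $T^{-1}$.

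To get the lower bound on $\|\chi_{x_\phi}\phi\|$ I would use the $(\mathrm{SULE}_f)$ hypothesis together with $\|\phi\|=1$. Summing \eqref{suleHf} over $x$,
\begin{equation*}
1=\sum_{x\in\z^d}\|\chi_x\phi\|^2 \le C_{\zeta,\sigma,\eps}^2\, f(\alpha_\phi)^2\, \e^{2\eps|x_\phi|^\zeta}\sum_{x\in\z^d}\e^{-2\sigma|x-x_\phi|^\zeta},
\end{equation*}
and the last sum is a finite constant $C_{\sigma,\zeta,d}$ independent of $x_\phi$. This yields $1 \le C\, f(\alpha_\phi)^2\,\e^{2\eps|x_\phi|^\zeta}$, i.e. $f(\alpha_\phi) \ge C\,\e^{-\eps|x_\phi|^\zeta}$ for every $\eps>0$. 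Combined with the growth hypothesis $f(s)\le C_\eps\,\e^{\eps s^{-\zeta/2\kappa}}$, this forces $\e^{\eps\,\alpha_\phi^{-\zeta/2\kappa}} \ge C_\eps^{-1} C\,\e^{-\eps|x_\phi|^\zeta}$, hence $\eps\,\alpha_\phi^{-\zeta/2\kappa} \ge -\log(C_\eps/C) - \eps|x_\phi|^\zeta$; since we are free to choose $\eps$ small (say after also noting $\alpha_\phi\le 1$ so $\alpha_\phi^{-\zeta/2\kappa}\ge 1$), a bookkeeping of constants gives $\alpha_\phi^{-\zeta/2\kappa}\le C(1+|x_\phi|^\zeta)$, i.e. $\alpha_\phi^{-1}\le C\scal{x_\phi}^{2\kappa}$, which is exactly \eqref{bnd alpha phi}.

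The main obstacle is the quantifier juggling in the last step: the inequality $f(\alpha_\phi)\ge C\e^{-\eps|x_\phi|^\zeta}$ holds for \emph{all} $\eps>0$ with a constant $C_\eps$ that may blow up as $\eps\to0$, and the upper bound on $f$ likewise carries an $\eps$-dependent constant. One must choose a single admissible $\eps$ (or track how the two $\eps$'s interact) so that the $|x_\phi|^\zeta$ terms on both sides combine to the clean power $\scal{x_\phi}^{2\kappa}$ rather than something subexponentially larger. This is purely a matter of organizing constants — no analytic difficulty — but it is where care is needed, and it is also the place where the precise matching exponent $-\zeta/2\kappa$ in the hypothesis on $f$ is used. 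The verification that $\scal{y}\le C\scal{x}$ on unit boxes and that $\sum_x\e^{-2\sigma|x|^\zeta}<\infty$ are routine.
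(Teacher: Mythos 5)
Your first step (pointwise lower bound $T^{-1}\ge C\scal{x}^{-\kappa}$ on unit boxes, giving $\alpha_\phi \ge C^2\sum_x\scal{x}^{-2\kappa}\|\chi_x\phi\|^2$) is fine and is the same starting point as the paper. The proposal breaks down at the next step, and the two fixes you then attempt both fail.

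First, the reduction to showing $\|\chi_{x_\phi}\phi\|\ge c>0$ cannot work. The vector that has an $x$-independent lower bound on its maximal box mass is the $T^{-1}$-normalized $\tilde\phi=\phi/\sqrt{\alpha_\phi}$ used in Lemma~\ref{lemf}, for which $\|\chi_{x_\phi}\tilde\phi\|\ge C_d^{-1/2}$. For $\phi$ itself (normalized in $\h$) this only gives $\|\chi_{x_\phi}\phi\|\ge C_d^{-1/2}\sqrt{\alpha_\phi}$, and substituting into your inequality yields the tautology $\alpha_\phi\gtrsim\scal{x_\phi}^{-2\kappa}\alpha_\phi$. Indeed a normalized $\phi$ spread over $N$ boxes has $\|\chi_{x_\phi}\phi\|^2\sim1/N$, which can be arbitrarily small, so no uniform lower bound on $\|\chi_{x_\phi}\phi\|$ is available.

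Second, the alternative argument via $f(\alpha_\phi)$ points the wrong way. From $\|\phi\|=1$ and the summed SULE$_f$ bound you get the \emph{lower} bound $f(\alpha_\phi)\ge C\,\e^{-\eps|x_\phi|^\zeta}$, while the hypothesis supplies only an \emph{upper} bound $f(\alpha_\phi)\le C_\eps\,\e^{\eps\alpha_\phi^{-\zeta/2\kappa}}$. Chaining them gives $\e^{\eps\alpha_\phi^{-\zeta/2\kappa}}\ge C'\,\e^{-\eps|x_\phi|^\zeta}$, i.e. a large quantity exceeds a small one — trivially true and vacuous. Since $\alpha_\phi\le1$, the left side is $\ge\e^\eps$ and the right side is $\le C'$, so the inequality constrains nothing; in particular there is no way to extract the needed \emph{upper} bound $\alpha_\phi^{-\zeta/2\kappa}\le C(1+|x_\phi|^\zeta)$. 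The paper avoids this by never restricting to a single box: it restricts to a ball $\Lambda_{R_\phi}(x_\phi)$ whose radius $R_\phi$ is chosen (via the SULE decay) so that $\phi$ carries at least $8/9$ of its $L^2$ mass there, and where $R_\phi$ itself contains an $\alpha_\phi^{-1/2\kappa}$ term (through $\log f(\alpha_\phi)$) multiplied by $(\eps/\sigma)^{1/\zeta}$. This produces the self-referential inequality $\alpha_\phi\ge\frac{8}{9}\bigl\{(1+(\eps/\sigma)^{1/\zeta})|x_\phi|+(\eps/\sigma)^{1/\zeta}\alpha_\phi^{-1/2\kappa}+C'\bigr\}^{-2\kappa}$, which can be solved for $\alpha_\phi^{-1/2\kappa}$ once $\eps$ is taken small enough that the coefficient $(\eps/\sigma)^{1/\zeta}$ is below the threshold needed to absorb the $\alpha_\phi^{-1/2\kappa}$ term. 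That mechanism — keeping the $\alpha_\phi$-dependence inside $R_\phi$ and absorbing it — is what your proposal is missing.
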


\begin{proof}
We note that from \eqref{suleHf} we get  
\begin{align}\label{1/9}
\norm{\chi_{|x-x_{\phi}|\geq R} \ \phi}^2 \leq C_{\zeta,\sigma,\eps}^2 f^2(\alpha_{\phi}) \ \e^{2\eps|x_{\phi}|^\zeta} 
\sum_{|x-x_{\phi}|\geq R} \e^{-2\sigma|x-x_{\phi}|^\zeta} \leq \frac{1}{9} ,
\end{align}
if we take 
\begin{equation}\label{Rphi}
R\geq R_{\phi}:=(\frac{\eps}{\sigma})^{1/\zeta}|x_{\phi}| + (\frac{1}{\sigma}\log f(\alpha_{\phi})+ \frac{1}{\sigma}\log(3 
\ C_{\zeta,\sigma,\eps}))^{1 / \zeta}
\end{equation}
Since $|x-x_{\phi}|\leq R_{\phi}$ implies that $|x|\le |x_{\phi}| + R_{\phi}$ and using \eqref{1/9} and \eqref{Rphi}, we have
\begin{align}
\alpha_{\phi}=\norm{T^{-1} \phi}^2
&\geq \sum_{x\in\Lambda_{R_\phi}(x_{\phi})} \norm{\chi_x T^{-1} \phi}^2 \geq 
 \scal{|x_{\phi}|+R_{\phi}}^{-2\kappa} \norm{\chi_{\Lambda_{R_\phi}(x_{\phi})} \phi}^2\notag\\
&\geq  \frac89  \left\{(1+(\frac{\eps}{\sigma})^{1/\zeta})|x_{\phi}| + (\frac{\eps}{\sigma})^{1/\zeta} {\alpha_{\phi}}^{-1/{2\kappa}}
+ C^\prime_{\zeta,\sigma,\eps}\right\}^{-2\kappa}, \notag
\end{align}
for any $\phi\in\G_{\E}$. Thus, choosing $\eps$ small enough, yields \eqref{bnd alpha phi} .
\end{proof}

We complete the proof of Theorem~\ref{techthme}. 

\begin{proof}[Proof of Theorem~\ref{techthme}.]
As mentioned above, it is enough to prove that $(i)$ implies $(iv)$. If there exists a 
function $f$ such that for any $\eps>0$, we have
$$f(s)\leq C_{\eps} \e^{\eps s^{-\zeta/2\kappa}} \quad{for \ all} \ s \in ]0,1],
$$ 
and \eqref{sudecHf} holds, then the (SULE) property \eqref{suleHf} will occur with a factor $\frac{f(\alpha_\phi)}{\sqrt{\alpha_\phi}}$
in view of Lemma~\ref{lemf}. Proceeding now as in \cite[Proof of Proposition~A.1]{G} and making use of \eqref{suleHf}, we get

\begin{align}
\norm{\chi_x \phi} \norm{\chi_{u} \phi} 
& \leq 
\frac1{{\alpha_\phi}} \ f(\alpha_\phi)^2 C^2_{\zeta\sigma,\eps} \e^{2\eps|x_\phi|^\zeta} 
\e^{-\sigma|x-x_\phi|^\zeta-\sigma|u-x_\phi|^\zeta} 
\\
&\leq 
\frac1{{\alpha_\phi}} \ f(\alpha_\phi)^2 C^2_{\zeta,\sigma,\eps} \e^{(2\eps-\eps')|x_\phi|^\zeta} 
\e^{\eps'|u|^\zeta} \e^{-(\sigma-\eps')|x-u|^\zeta},
\end{align}
for all $x,u\in\z^d$ and with $\eps'<\sigma$. We note that it follows from \eqref{bnd alpha phi} that 
\begin{align}
\frac1{\alpha_\phi} \ f(\alpha_\phi)^2
&\leq
\e^{-\eps(\alpha_\phi)^{-\zeta/2\kappa}}
C\scal{x_\phi}^{2\kappa} \e^{3\eps(C\scal{x_\phi})^{\zeta}} \\
& \leq 
\e^{-\eps(\alpha_\phi)^{-\zeta/2\kappa}}
\e^{C_1\eps|x_\phi|^\zeta+C_2},
\end{align}
for some postive and finite constants $C_1, C_2$.
Taking $\eps'> (C_1+2)\eps$, we conclude that \eqref{sudecHf} follows for any function $f\geq0$ such that for any $\eps>0$,
 $f(s)\geq C_{\eps} \e^{-\eps \ s^{-\zeta/ 2\kappa}}$ for all $0<s \leq 1$.
\end{proof}

\begin{proof}[Proof of Theorem~\ref{thmequiv}.]
 The``equivalence" part of the  proof is currently provided by Theorem~\ref{techthme} and Lemma~\ref{lemf}. It remains to show that
the centers of localization $\{x_{\phi_n}\}_n$ can be reordered in such a way that $|x_{\phi_n}|$ increases with $n$. We proceed as in  
\cite{DRJLS2}. 

Given $L>0$, let $R_L := \delta L+ C_\delta$ as in \eqref{Rphi} for some $\delta>0$ (that depends on $\zeta$ and $\sigma$) and where we 
have taken $f\equiv1$, it follows from 
\eqref{1/9} that 
\begin{equation}
\|\chi_{x_{\phi_n}, R_L} \phi_n\|^2 > \frac{1}{9} \ \ \mathrm{whenever} \ \ |x_{\phi_n}|\leq L ,
\end{equation}
and if $N_L$ is the cardinal of the set  $ \{n, \phi_n\in\G_\E; |x_{\phi_n}| \leq L \} $
then we conclude that
\begin{align}
\frac19 N_L 
& \leq \sum_{n, |x_{\phi_n}|\le L }  \|\chi_{x_{\phi_n}, R_L} \phi_n\|^2 
\le \normsch{\chi_{0, L+R_L} P_\E}^2 \notag\\
& \leq C L^{2\kappa} \alpha_{\E},\label{NLH}
\end{align}
for some finite constant $C$ that depending $\zeta$ and $\sigma$.
Since $N_L <\infty$ for all $L>0$ by, \eqref{sumalphaH}, we may reorder the centers of localization in increasing order in $n$, which yields
$|x_{\phi_n}|\geq C _{\E,\sigma,\zeta} \ n^{\frac{1}{2\kappa}}$. 

\end{proof}

We now turn to Theorem~\ref{thmfinrank} and the strong forms of (SUDEC) and (SULE).

\begin{proposition}\label{centers}
Assume (SULE)/(SUDEC) for vectors in the range of $P_E$. For any $\delta>0$ there is a constant $C_\delta$ such that, 
for any $\phi,\psi\in\mathrm{Ran} \ P_E$ and $E\in\E$,   their localization centers $x_\phi , x_\psi$ satisfy
\begin{equation}\label{center}
 |x_\phi - x_{\psi}|\leq \delta |x_\phi|+C_\delta.
\end{equation}
\end{proposition}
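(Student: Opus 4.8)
The statement to prove is Proposition~\ref{centers}: under (SULE)/(SUDEC) for all vectors in $\mathrm{Ran}\,P_E$, the localization centers of any two such vectors $\phi,\psi$ satisfy $|x_\phi-x_\psi|\le\delta|x_\phi|+C_\delta$. The natural strategy is to use the fact that both $\phi$ and $\psi$ are concentrated near their own centers, and to find a single vector in $\mathrm{Ran}\,P_E$ that ``sees'' both centers, forcing them to be close. The cleanest device is to look at $\chi_{x_\phi}\phi$ and observe it must have nonnegligible norm (this is exactly the content of the computation in Lemma~\ref{lemf}, giving $\|\chi_{x_\phi}\tilde\phi\|\ge C_d^{-1/2}$, equivalently $\|\chi_{x_\phi}\phi\|\ge C_d^{-1/2}\sqrt{\alpha_\phi}$), and similarly for $\psi$. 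The point is then that evaluating the (SUDEC+)-type bound $\|\chi_x\phi\|\,\|\chi_u\psi\|\le C_{\zeta,\sigma,\eps}\sqrt{\alpha_\phi\alpha_\psi}\,\e^{\eps|u|^\zeta}\e^{-\sigma|x-u|^\zeta}$ at $x=x_\phi$, $u=x_\psi$ (after using the equivalence in Theorem~\ref{thmfinrank} so that the mixed two-vector bound with $\sqrt{\alpha_\phi\alpha_\psi}$ is available) gives
\begin{equation*}
C_d^{-1}\sqrt{\alpha_\phi\alpha_\psi}\le\|\chi_{x_\phi}\phi\|\,\|\chi_{x_\psi}\psi\|\le C_{\zeta,\sigma,\eps}\sqrt{\alpha_\phi\alpha_\psi}\,\e^{\eps|x_\psi|^\zeta}\e^{-\sigma|x_\phi-x_\psi|^\zeta},
\end{equation*}
so that $\e^{\sigma|x_\phi-x_\psi|^\zeta}\le C\,\e^{\eps|x_\psi|^\zeta}$, i.e. $|x_\phi-x_\psi|^\zeta\le\frac\eps\sigma|x_\psi|^\zeta+C'$.

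\textbf{From $|x_\psi|$ to $|x_\phi|$.} The bound just obtained controls $|x_\phi-x_\psi|$ in terms of $|x_\psi|$, but the statement wants $|x_\phi|$ on the right. This is a routine swap: $|x_\psi|\le|x_\phi|+|x_\phi-x_\psi|$, so $|x_\phi-x_\psi|^\zeta\le\frac\eps\sigma(|x_\phi|+|x_\phi-x_\psi|)^\zeta+C'$. Since $\zeta\le1$ we have $(a+b)^\zeta\le a^\zeta+b^\zeta$, giving $|x_\phi-x_\psi|^\zeta(1-\frac\eps\sigma)\le\frac\eps\sigma|x_\phi|^\zeta+C'$; choosing $\eps$ small (depending on $\sigma$) and then raising to the power $1/\zeta$ (using again subadditivity of $t\mapsto t^{1/\zeta}$ is false for $\zeta<1$, so instead use $(a+b)^{1/\zeta}\le 2^{1/\zeta-1}(a^{1/\zeta}+b^{1/\zeta})$, a convexity bound) converts this into $|x_\phi-x_\psi|\le\delta|x_\phi|+C_\delta$ for any prescribed $\delta>0$, absorbing the $\eps$-dependence into $\delta$ and the constant into $C_\delta$. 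The parameter matching is: given $\delta$, choose $\eps=\eps(\delta,\sigma,\zeta)$ accordingly.

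\textbf{Main obstacle.} The only genuine subtlety is making sure the \emph{mixed} two-vector estimate — i.e. $\|\chi_x\phi\|\,\|\chi_u\psi\|\le C\sqrt{\alpha_\phi\alpha_\psi}\cdots$ for $\phi\neq\psi$ both in $\mathrm{Ran}\,P_E$ — is actually what we are allowed to assume. The hypothesis of the Proposition is ``(SULE)/(SUDEC) for vectors in $\mathrm{Ran}\,P_E$'', and by Theorem~\ref{thmfinrank} this is equivalent to property (iii) there, which is exactly the mixed bound \eqref{sudec^+}. So the plan is: first invoke Theorem~\ref{thmfinrank} to pass to form \eqref{sudec^+}; second, run the lower-bound argument from the proof of Lemma~\ref{lemf} (the summation $1=\|T^{-1}\tilde\phi\|^2\le\|\chi_{x_\phi}\tilde\phi\|^2\sum_u\|\chi_uT^{-1}\|^2\le C_d\|\chi_{x_\phi}\tilde\phi\|^2$) to get $\|\chi_{x_\phi}\phi\|\ge C_d^{-1/2}\sqrt{\alpha_\phi}$ and likewise for $\psi$; third, combine to get $\e^{\sigma|x_\phi-x_\psi|^\zeta}\le C_{\zeta,\sigma,\eps}C_d\,\e^{\eps|x_\psi|^\zeta}$; fourth, do the elementary $|x_\psi|\leftrightarrow|x_\phi|$ bookkeeping with the chosen $\eps$ to land on \eqref{center}. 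Everything after the first step is elementary; if one preferred to avoid quoting Theorem~\ref{thmfinrank} one could also argue directly from (SULE) applied to $\phi+\psi$ or from the polarization identity, but quoting the theorem is shortest and the excerpt permits it.
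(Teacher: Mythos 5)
Your primary route is circular, and the fallback you mention in passing is actually the argument the paper uses. The issue: you propose to first ``invoke Theorem~\ref{thmfinrank} to pass to form \eqref{sudec^+}'', i.e.\ to pass from (SUDEC) for single vectors in $\mathrm{Ran}\,P_E$ (property $(ii)$ there) to the mixed two-vector bound $(iii)$. But in the paper the only way to get from $(ii)$ back into the equivalence cycle is via $(ii)\Rightarrow(vi)$, and that step is proved \emph{using} Proposition~\ref{centers}. So Theorem~\ref{thmfinrank}'s full equivalence is not available at this stage; quoting it here makes the argument circular. And a direct derivation of $(iii)$ from $(ii)$ by polarization does not come for free: writing $\phi,\psi$ in terms of $\varphi_\pm=\frac{1}{\sqrt2}(\phi\pm\psi)$ and expanding $\norm{\chi_x\phi}\norm{\chi_u\psi}$ reproduces cross terms $\norm{\chi_x\varphi_+}\norm{\chi_u\varphi_-}$ of exactly the mixed type you are trying to obtain.

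What the paper does instead avoids needing $(iii)$ at all. After normalizing $\phi,\psi$ and fixing $R_\phi=\delta|x_\phi|+C_\delta$, $R_\psi=\delta|x_\psi|+C_\delta$ so that each vector carries mass $\le 1/3$ outside the ball of that radius around its own center, one may assume $|x_\phi-x_\psi|\ge 2(R_\phi+R_\psi)$ (otherwise \eqref{center} follows by bookkeeping). One then forms the \emph{single} vector $\varphi=\frac1{\sqrt2}(\phi+\psi)\in\mathrm{Ran}\,P_E$ and shows by the triangle inequality that $\norm{\chi_{|x-x_\phi|\le R_\phi}\varphi}\ge\frac1{3\sqrt2}$ and likewise near $x_\psi$ (the far-apart assumption prevents $\psi$ from cancelling $\phi$ near $x_\phi$ and vice versa). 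Applying the hypothesis — (SUDEC) for the one vector $\varphi$ — to the product of these two masses gives $\frac1{18}\le C_{\zeta,\sigma,\eps}\,\e^{C'\eps|x_\phi|^\zeta}\e^{-\sigma(\frac12|x_\phi-x_\psi|)^\zeta}$, which yields \eqref{center} after the same kind of elementary rearrangement you describe at the end. Your lower bound $\norm{\chi_{x_\phi}\phi}\ge C_d^{-1/2}\sqrt{\alpha_\phi}$ and your $|x_\psi|\leftrightarrow|x_\phi|$ bookkeeping are both fine, but the engine of the argument has to be the superposition trick rather than an appeal to the mixed bound.
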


\begin{proof} Without loss of generality, we may suppose that $\phi,\psi$ are orthonormalized. We mainly use \eqref{Rphi} 
where we take $f\equiv1$ that yields that for any $\delta>0$,
$$
\norm{\chi_{|x-x_\phi|\ge R_\phi} \phi} \leq \frac13 \ \quad{\mathrm{for}} \ R_\phi=\delta|x_\phi| + C_\delta.
$$ 
If $|x_\phi-x_\psi|\le 2(R_\phi + R_\psi)$, then \eqref{center} follows from the definition of $R_\phi, R_\psi$. 
Assume $|x_\phi-x_\psi|\ge 2(R_\phi + R_\psi)$ and set
$\varphi=\frac1{\sqrt{2}}(\phi+\psi)\in \mathrm{Ran} P_E$. As a consequence,
\begin{align}
\norm{\chi_{|x-x_\phi|\le R_\phi} \varphi}
&\ge \frac1{\sqrt{2}} \norm{\chi_{|x-x_\phi|\le R_\phi} \phi} - \frac1{\sqrt{2}} \norm{\chi_{|x-x_\phi|\le R_\phi} \psi} \notag\\
&\ge \frac2{3\sqrt{2}} - \frac1{\sqrt{2}} \norm{\chi_{|x-x_\psi|\ge R_\psi} \psi} \notag\\
&\ge \frac2{3\sqrt{2}} - \frac1{3\sqrt{2}} = \frac1{3\sqrt{2}}.
\end{align}
In the same manner, we have $\norm{\chi_{|x-x_\psi|\le R_\psi} \varphi} \ge\frac1{3\sqrt{2}}$. 
Having in mind that we assumed 
$|x_\phi-x_\psi| - (R_\phi + R_\psi) \ge \frac12{|x_\phi-x_\psi|}$ and applying (SUDEC) to $\varphi$, we get
\begin{align}
\frac1{18} &\le
\norm{\chi_{|x-x_\phi|\le R_\phi} \varphi}\norm{\chi_{|x-x_\psi|\le R_\psi} \varphi} \notag\\
& \le C_{\zeta,\sigma,\eps} \  \e^{C^{'}_{\zeta,\delta} \eps|x_\phi|^\zeta} 
\e^{-\sigma(|x_\phi-x_\psi|-(R_\phi +R_\psi))^\zeta}\notag\\
& \le C_{\zeta,\sigma,\eps} \ \e^{C^{'}_{\zeta,\delta} \eps|x_\phi|^\zeta} \e^{-\sigma(\frac12|x_\phi-x_\psi|)^\zeta}.
\end{align}
The result follows.
\end{proof}

\begin{remark}
 Notice that \eqref{center} asserts that if (SULE) holds for all vectors in the span of $\G_\E$ then the 
multiplicity has to be finite, since a ball of given radius can only contain a finite number of centers of localization by 
Theorem~\ref{thmequiv}.
\end{remark}

\begin{proof}[Proof of Theorem~\ref{thmfinrank}.] 
 
Since $\norm{\chi_x \phi}\leq \normsch{\chi_x P_E}$ 
for any $\phi\in \mathrm{Ran} P_E$, we immediately get $(iv)\Rightarrow(iii)\Rightarrow(ii),(i)$,
and $(vii)\Rightarrow(vi)\Rightarrow(v)$. Next, we have $(iv)\Leftrightarrow(vii)$ using the same strategy as in the proof of 
Theorem~\ref{thmequiv}.

To see that $(i) \Rightarrow(iv)$, let $(\phi_n)_{n\geq 1}$ be an orthonormalized 
basis of $\mathrm{Ran}\ P_E$ verifying \eqref{sudec+}. Then 
\begin{align}
\normsch{\chi_x P_E}^2 \normsch{\chi_u P_E}^2 
& = \sum_{n,m} \norm{\chi_x \phi_n}^2 \norm{\chi_u \phi_m}^2 \notag\\
& \leq \left(\sum_n \alpha_{\phi_n}\right)^2 {C_{\zeta,\sigma,\eps}^2} \ \e^{2\eps(|x|^\zeta+|u|^{\zeta})} \e^{-2(\sigma-\eps) |x-u|^{\zeta}}
\notag\\
&= {C_{\zeta,\sigma,\eps}^2} \ {\alpha_E}^2 \ \e^{2\eps(|x|^\zeta+|u|^{\zeta})} \e^{-2(\sigma-\eps) |x-u|^{\zeta}}. \label{Pux}
\end{align}
Finite multiplicity follows. Indeed, there exists $u\in\z^d$ such that $\normsch{\chi_u P_E}\neq0$ (otherwise $\tr P_E=0$),  hence 
for all $E\in\E$, $\tr P_E =\style{\sum_{x\in\z^d} \normsch{\chi_x P_E}^2}<\infty$ by \eqref{Pux}.
Next, we show that $(v)\Rightarrow (i)$. We write
\begin{align}
 \norm{\chi_x \phi_n} \norm{\chi_x \phi_m}
&\leq C_{\zeta,\sigma,\eps}^2 \ \e^{2\eps|x_E|^{\zeta}} \e^{-\sigma (|x-x_E|^{\zeta}+|u-x_E|^{\zeta})}\notag\\
&\leq C_{\zeta,\sigma,\eps}^2 \ \e^{-2\eps|x_E|^{\zeta}} \e^{2\eps(|x|^\zeta+|u|^{\zeta})} \e^{-(\sigma-2\eps) |x-u|^{\zeta}},\notag
\end{align}
with $\eps<\sigma/2$. Then $(i)$ follows since $\e^{-2\eps|x_E|^{\zeta}}  \le  C\scal{x_E}^{-2\kappa}\le \sqrt{\alpha_{\phi_n}\alpha_{\phi_m}}$ 
by  \eqref{bnd alpha phi}.

We thus have $(iv) \Rightarrow (vii) \Rightarrow  (v) \Rightarrow  (i) \Rightarrow  (iv)$, and the equivalence is proved 
($(iv')$ and $(vii')$ can be deduced from Theorem~\ref{thmequiv} and Lemma~\ref{techthme}).  At last, we show that $(ii)\Rightarrow (vi)$. 
We have to show that we can get (SULE) with a common center of localization. By Lemma~\ref{lemf} we get a (SULE) bound for all $\phi\in\G_\E$, 
with centers of localization $x_\phi$. Let $x_\psi$ be one of them, but given. By Proposition~\ref{centers}, 
$|x_\phi-x_\psi|\le \delta |x_\psi|+C_\delta$, and 
\begin{align}
 \norm{\chi_x \phi}
 & \leq C_{\zeta,\sigma,\eps} \e^{\eps \abs{x_{\phi}}^{\zeta}} \e^{-\sigma\abs{x-x_{\psi}}^{\zeta}+\sigma\abs{x_\phi-x_{\psi}}^{\zeta}} \\
 & \leq C_{\zeta,\sigma,\eps,\delta} \e^{\eps \abs{x_{\phi}}^{\zeta}+\sigma\delta^\zeta|x_\psi|^\zeta} \e^{-\sigma\abs{x-x_{\psi}}^{\zeta}}
  \\
 & \leq C_{\zeta,\sigma,\eps,\delta} \e^{\eps' \abs{x_{\psi}}^{\zeta}} \e^{-\sigma\abs{x-x_{\psi}}^{\zeta}} ,
\end{align}
with $\eps'=\eps(1+\delta)^\zeta + \sigma\delta^\zeta$.

The bound \eqref{bndtr} is given by an argument similar to the proof of Lemma~\ref{lemalphax}. Indeed,
there are $\zeta\in(0,1], \sigma>0$ such that for any $\eps>0$ there is 
a finite constant $C'_{\zeta,\sigma,\eps}$ for which
\begin{align}
\norm{\chi_{|x-x_E|\geq R_E} P_E}^2 \leq \frac12, \mbox{ where }  R_E=(\frac{\eps}{\sigma})^{1/\zeta}|x_E|+
C'_{\zeta,\sigma,\eps}. \label{P_E}
\end{align}
Since
\begin{align}
\norm{\chi_{|x-x_E|\leq R_E} P_E}^2 \leq \norm{\chi_{|x|\leq (1+\frac{\eps}{\sigma})^{1/\zeta}|x_E|+
C'_{\zeta,\sigma,\eps} } P_E}^2, 
\end{align}
and with $\eps$ small enough one gets 
\begin{align}
\norm{\chi_{|x-x_E|\leq R_E} P_E}^2 \leq C_{\zeta,\sigma} \scal{x_E}^{2\kappa} \alpha_E,
\end{align}
and thus $\tr{P_E}=\norm{P_E}_1 = \normsch{P_E}^2 \leq \frac12 + C_{\zeta,\sigma} \scal{x_E}^{2\kappa} \alpha_E $.
Finally, the last bound \eqref{NLH'} could be deduced from the equation \eqref{P_E} and in proceeding analogously to \eqref{NLH}.

\end{proof}

\begin{proof}[Proof of Theorem~\ref{connexion}.]
The first claim follows immediately from \eqref{sudec+} applied to the case $n=m$ and from \eqref{sudecP} that we combine with
$\normsch{\chi_x P_E \chi_u}\leq \normsch{\chi_x P_E} \normsch{\chi_u P_E}$.
For the second part, notice that the implications from the left to the right are still valid. The novelty here is that under
the hypothesis of finite multplicity, all these properties become equivalent. \newline
Assuming that $H$ exhibits \eqref{SULP} in $\overline\E$, 
we construct a family $\G_\E$ of orthonormalized eigenfunctions that verifies \eqref{suleH}, namely (SULE) property. For any given $E\in \E$, 
since ${\sum_{x\in\z^d}} \normsch{\chi_x P_E}^2 =\tr{P_E}=N<\infty$ there exists $ x_E \in\z^d$ which maximizes 
$\normsch{\chi_x P_E}$. Note that $ \normsch{P_E\chi_{x_E}}\neq 0$, otherwise we would have $\normsch{P_E\chi_x}=0$ for all $x$ 
which is not possible since $\tr{P_E}\neq 0$. Now, we pick a unit vector $\eta\in\h$ such that $\norm{\eta}=1$ and  
$\norm{P_E\chi_{x_E}\eta} \geq  \frac12 \norm{P_E\chi_{x_E}}$, and set
\begin{align}
\phi_1 = \frac{P_E \chi_{x_E} \eta}{\norm{P_E \chi_{x_E}\eta}} \in P_E\h=\h_E.
\end{align}
We have
\begin{align}
\alpha_1 &:= \tr (T^{-1} P_{\phi_1} T^{-1}) = \norm{ T^{-1}\phi_1}^2 \\
& = \sum_{x\in\z^d} \norm{\chi_x T^{-1}  \phi_1}^2 \notag\\ 
& \leq \sum_{x\in\z^d}\norm{\chi_x T^{-1}}^2  \ \frac{\norm{\chi_x P_E \chi_{x_E} \eta}^2}{\norm{P_E \chi_{x_E}\eta}^2} \leq 
\sum_{x\in\z^d} \norm{\chi_x T^{-1}}^2  \norm{\chi_{x} P_E }^2 \notag\\
& \leq C_d  \norm{P_E  \chi_{x_E}}^2 
 \leq 4 C_d \norm{P_E  \chi_{x_E} \eta}^2. \label{alpha1}
\end{align}
As $$\norm{\chi_x \phi_1}\leq \frac{\normsch{\chi_x P_E \chi_{x_E}}}{\norm{P_E \chi_{x_E} \eta}} , $$
we get from \eqref{SULP} and \eqref{alpha1}, that
\begin{align}\label{sqrtalpha1} 
\norm{\chi_{x} \phi_1} \leq \tilde{C}_{\zeta,\sigma,\epsilon} \frac1{\sqrt{\alpha_1}} \ \e^{\epsilon |x_E|^\zeta} 
\e^{-\sigma|x-x_E|^\zeta}.
\end{align}
We repeat this procedure with $P_{E,1}:= P_E - P_{\phi_1}$, and so on with $P_{E,n+1} := P_{E,n} - P_{\phi_{n+1}}$, until the 
rank is zero. The finiteness of the rank of $P_E$, denoting by $N$, ensures that the process will stop. 
Notice that the projectors $P_{E,n}$ exhibit \eqref{SULP}. For instance $\normsch{\chi_x P_{E,1} \chi_u}$ is a sum of two decaying quantities 
\begin{align}
 \normsch{\chi_x P_{E,1} \chi_u}
&\leq  \normsch{\chi_x P_E \chi_u} +\norm{\chi_x \phi_1} \norm{\chi_u \phi_1} \label{sudec_phi1}\\
&\leq C_{\zeta,\sigma,\eps} \ \e^{\eps|u|^\zeta} \e^{-\sigma|x-u|^\zeta},\notag
\end{align}
where the decay of the second term in the r.h.s of \eqref{sudec_phi1} results from \eqref{sqrtalpha1}.
Therefore, by induction 
we get $N$ orthonormalized functions $\phi_n$ satisfying the (SULE)-like estimate in the sense that 
\begin{align}\label{sqrtalpha_n} 
\norm{\chi_{x} \phi_n} \leq \tilde{C}_{\zeta,\sigma,\eps} \ \frac1{\sqrt{\alpha_n}} \ \e^{\eps |x_{E_n}|^\zeta} 
\e^{-\sigma|x-x_{E_n}|^\zeta},
\end{align}
for any $n\in\{1,\dots, N\}$.
We can get rid of $\alpha_n^{-1/2}$ from the proof of Theorem~\ref{techthme}, in which case only an arbitrary small fraction 
of the mass $\sigma$ is lost. Alternatively, at each step, one can follow \cite[Proof of Lemma 4]{EGS} and bound 
$\norm{\chi_{x} \phi_n}$ by the geometric mean of \eqref{SULP} and $\norm{\chi_{x} \phi_n} \leq \norm{\chi_{x_{E_n}} P_{E,n}}$. 
In this latter case, the final $\sigma$ is divided by $2$ at each step.

\end{proof}

We turn to the proof of Theorems \ref{geomthm1}  and  \ref{geomthm2} .
Theorem~\ref{geomthm1} follows immediately from the proof of Theorem~\ref{equiv1} . The main point is to notice that the technical 
Lemma~\ref{order} is still valid in the case of subexponential growth, where the r.h.s of \eqref{N} becomes 
$\e^{C_{\sigma,\zeta,\beta}(\log L)^{\beta/\zeta}}$.\newline
In view of the proof Theorem~\ref{thmfinrank}, the Theorem~\ref{geomthm2} can be deduced by adapting the different steps which involve
the geometry of the space. In particular, the technical result in Lemma~\ref{lemalphax} and Theorem~\ref{techthme} remain true if we take 
$f(s) \le C_\eps \e^{ (- \eps \log s)^{\zeta / \alpha}}$ in $(i)$ and $f(s) \ge C_\eps \e^{- (- \eps \log s)^{\zeta / \alpha}}$ in $(iv)$
for $s\in (0,1]$ in which case $\alpha_\phi\ge C \e^{-|x_\phi|^\alpha}$.

\section{Counterexamples}\label{counterex}
 
The first model is the free Landau Hamiltonian $H_B:= (-i\nabla-A)^2$ on $L^2(\R^2)$ where $A$ is the vector potential 
$A=\frac{B}{2} (-x_2,x_1)$ and $B>0$ is the strength of the constant magnetic field. It is well known that the Landau levels are 
infinitly degenerated and that it exhibits the property \eqref{SULP} and thus dynamical localization. 
We claim that (SUDEC) does not occur for $H_B$. In fact, consider 
for instance the eigenfunctions associated to the first Laudau level and whose expression is given by
\begin{equation}
 \varphi_n (z) = \left(\frac{B^n}{2\pi \ 2^n \ n!} \right)^{1/2} z^n \ \e^{-\frac{B}{4} |z|^2}.
\end{equation}
For $n$ integer, we define the radial function $f_n(r)= r^{2n} \ \e^{-\frac{B}{2} r^2}$ for which the maximum is achieved for the 
radius $ r_{\mathrm{max}}= (\frac{2n}{B})^{1/2}$. Let $z_1$ and $z_2$ to be affixes of two opposite points on this maximal circle. A
simple computation yields

\begin{equation}
 |\varphi_n (z_1) \varphi_n (z_2)| =\frac{n^n}{2\pi n!} \ \e^{-n}.
\end{equation}
Together with the Stirling's formula, it gives that there are no positive constants $c_1$ and $c_2$ ($c_2$ depends on $B$) 
such that $\frac{1}{\sqrt{n}} \leq c_1 \e^{-c_2\sqrt n}$ for all $n$.
\begin{remark}
Another way to see that $H_B$ does not has (SUDEC) can be derived from the theory of the quantum Hall effect.
Indeed, if (SUDEC) would occur for a basis of eigenvectors then the Hall conductance $\sigma_H$ would be constant at Landau levels by 
\cite{GKS1}, while $\sigma_H$ is known to have jumps.

\end{remark}

Next, let us consider the discrete Laplacian $-\Delta$ on subgraphs of $\z^2$. It is enough to consider a subgraph given
by $J\geq2$ disjoint copies $\cc_j $ of a given finite cluster $\cc_1$ and we set $H_j :=-\Delta_{\mid_{\cc_j}}$.  The operators $H_j$, $j=1, \cdots, J$, have the same discrete spectrum with compactly supported eigenfunctions.
The operator $-\Delta_{\mid{\cup_j \cc_j}}=\oplus_j H_j $ for $1\leq j\leq J$ for $1\leq j\leq J$,
has (SULE) since we obtain a basis of compactly supported eigenfunctions. But (SULE+) and (SUDEC+) does not hold as soon as copies $\cc_i$
and $\cc_j$ for $i\neq j$, are far enough so that Proposition~\ref{centers} is violated. 
We mention that such finite clusters appear in a natural way in percolation theory. We refer to \cite{KiM} for more details.

\section{Appendix A}\label{A}
In this section, we shall order the moments \eqref{A_k} given in Section~\ref{proofs}, eq \eqref{A_k}, uniformly on the space.
\begin{proof}[Proof of Lemma~\ref{order}.]
Set $$a_{kx}(u):=\frac{\normsch{\chi_x P_k \chi_u}^2}{\normsch{\chi_u P_k} ^2},$$ which verify
\begin{equation}\label{cnd1}
\sum_x a_{kx}(u)=\frac{\normsch{\chi_u P_k} ^2}{\normsch{\chi_u P_k} ^2} =1 \ \ \ \mathrm{for \ all} \ u\in\z^d \ \mathrm{and \ all} \ k\in\z,
\end{equation}
and 
\begin{equation}\label{cnd2}
\sum_{k, \ E_k \in \ I} a_{kx}(u) \leq \sum_{k, \ E_k \in \ I} \normsch{\chi_x P_k} ^2 = \tr(\chi_x P_I \chi_x)\leq 1 \ \ \
\forall x,u\in\z^d,\end{equation}
where $P_I$ denotes the projection on the interval $I$.\\
For $L\in\N$, define the following set 
$$J_u(L):=\{k\in\z, E_k \in \ I ; \sum_{x\notin \ \Lambda_L(u)} a_{kx}(u)\leq 1/2\},$$ 
and consider the sum
$$S_u(L):=\sum_{k\in J_u(L)}\sum_{x\in \ \Lambda_L(u)} a_{kx} (u).$$
We will estimate the cardinal of $J_u (L)$ in term of the volume of the box $\Lambda_L (u)$.
Note that it follows from \eqref{cnd1} that for $ k\in J_u(L)$ , we have 
$$\sum_{x\in \ \Lambda_L(u)} a_{kx}(u)=\sum_x a_{kx}(u)-\sum_{x\notin \ \Lambda_L(u)} a_{kx}(u)\geq 1/2.$$
Thus \ $ S_u(L)\geq \frac{1}{2} \sharp(J_u(L))$. Moreover, the bound \eqref{cnd2} yields
$$\style{S_u(L)\leq\sum_{k, \ E_k \in \ I} \sum_{x\in \ \Lambda_L(u)} a_{kx}(u)\leq \sum_{x\in \ \Lambda_L(u)} 1 \leq C_d L^d},$$
and hence 
\begin{equation}\label{card J_u}\sharp(J_u(L))\leq C_d L^d.\end{equation}
Now given $\sigma>0$ and $\zeta\in (0,1]$, we set 
$$I_u(L,\sigma,\zeta)=\{k\in\z, E_k \in \ I ; \ A_k(\sigma,\zeta,u)\leq \frac{1}{2} \ \e^{\sigma L^{\zeta}}\},$$  and notice that 
$$\style{A_k(\sigma,\zeta,u)\geq \e^{\sigma L^{\zeta}} \sum_{x\notin \ \Lambda_L(u)} a_{kx}(u)},$$
which shows that \ $I_u(L,\sigma,\zeta)\subset J_u(L)$. Taking the exponential rescaling $l=\frac{\e^{\sigma L^{\zeta}}}{2}$ and
using \eqref{card J_u}, we obtain
$$N(l):=\sharp\{k\in\z, E_k \in \ I ; \ A_k(\sigma,\zeta,u)\leq l \} 
\leq C_{\sigma,\zeta,d} \  (\log l)^{d/{\zeta}}, $$
and thus the finitness of the set $\{k\in\z, \ A_k(\sigma,\zeta,u)\leq l \}$ follows.\\ 
For any $u \in\z^d$, there exists a new order $j_u: k\mapsto j_u(k)$ for $k\in\z$ in such a way that
$A_{j_u(k)}(\sigma,\zeta,u)$ increases. So
$N(A_{j_u(k)})=\abs{j_{u(k)}}$ and with $A_{j_{u(k)}}(\sigma,\zeta,u)=l$, one gets
 $$\abs{j_u(k)}\leq C_{\sigma,\zeta,d} \ \left( \log (A_{j_u(k)}(\sigma,\zeta,u))\right) ^{d/{\zeta}}.$$
We conclude that $A_k (\sigma,\zeta,u)$ may be ordered so that the increase with $k$ in the sense
$$ A_k (\sigma,\zeta,u)\geq \e^{\tilde{C}_{\sigma,\zeta,d} \ k^{{\zeta}/d}},$$
for a positive constant $\tilde{C}_{\sigma,\zeta,d}$ which is uniform in $u\in\z^d$.
\end{proof}

\section{Appendix B}\label{B}
In this part we review the first result of the Section 2 in the case of random Hamiltonians. More precisely, we consider a 
$\z^d$-ergodic operator $H_\omega$. We adapt the notations and the quantities used previously. We consider the random 
$(\sigma,\zeta)$-subexponential moment

\begin{equation}\label{M_omega}
M_{u,\omega}(\sigma,\zeta,\f,t):= \normsch{\e^{{\frac{\sigma}{2}}|X-u|^{\zeta}} \e^{-itH_\omega} \f(H_\omega) \chi_u}^2 .
\end{equation}

We establish a similar version in expectation of Theorem~\ref{equiv1} that we formulate as

\begin{theorem}
 Let $I\subset\sigma(H)$ be an interval and assume that $H$ has pure point spectrum in $I$. The following properties are equivalent.
\begin{enumerate}
  \item[(i)] There exist $\sigma>0, \zeta\in(0,1]$ so that for any $\f\in\mathcal{C}^{\infty} _{0,+}(I)$, 
\begin{equation}\label{Eexp_DL}
\sup_T \M_u(\sigma,\zeta,\f,T)  :=\sup_T \frac{1}{T}\int_{0}^{T} \Es\{M_{u,\omega}(\sigma,\zeta,\f,t)\}  \dd t < \infty.
\end{equation}
 \item[(ii)] There exist $\sigma>0, \zeta\in(0,1]$ so that for any $\f\in\mathcal{C}^{\infty} _{0,+}(I)$,
\begin{equation}
\sup_T \frac1T \int_0^\infty \mathrm{e}^{-t/T}\Es\{M_{u,\omega}(\sigma,\zeta,\f,t)\}  \dd t < \infty.
\end{equation} 
\item[(iii)] There exist $\sigma>0, \zeta\in(0,1]$ any $\f\in\mathcal{C}^{\infty} _{0,+}(I)$, 
\begin{equation}\label{Eexp_DL0}
\Es( \sup_t M_{u,\omega}(\sigma,\zeta,\f,t)) < \infty.
 \end{equation}
  
\item[(iv)] There exist $\zeta\in (0,1], \sigma>0$ such that for 
for any $\f\in\mathcal{C}^{\infty} _{0,+}(I)$, there is a constant $C_{\zeta,\sigma,\f}<\infty$, so that
\begin{equation}\label{ESUDL} 
\Es(\sup_t\normsch{\chi_x \ \e^{-itH}\f(H) \chi_u})\leq C_{\zeta,\sigma,\f} \ \e^{-\sigma\abs{x-u} ^{\zeta}} \ \ 
for \ all \ x,u \in\z^d. 
\end{equation}

\item[(v)] There exist $\zeta\in (0,1], \sigma>0$ such that for any $\f\in\mathcal{C}^{\infty} _{0,+}(I)$, 
there is a constant $C_{\zeta,\sigma,\eps,\f}<\infty$, so that 
\begin{equation}\label{ESULP}
\Es(\sup_k\f(E_{k,\omega})\normsch{\chi_x P_{k,\omega} \chi_u})\leq C_{\zeta,\sigma,\f} \
 \e^{-\sigma\abs{x-u} ^{\zeta}} \ \ for \ all \ x,u \in\z^d. 
 \end{equation} 
\end{enumerate}
If $H$ satisfies one of these properties, we say that $H$ exhibits strong dynamical localization in $I$.

 \end{theorem}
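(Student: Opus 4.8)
The plan is to transfer the proof of Theorem~\ref{equiv1} to the ergodic setting: apply the deterministic pointwise estimates of Section~\ref{proofs} — in particular \eqref{lower bnd}, \eqref{getsulp} and \eqref{getsudl} — for each fixed $\omega$, and then take expectations using Tonelli's theorem, Fatou's lemma, and Jensen's inequality for $s\mapsto s^{1/2}$ (concave) and $s\mapsto s^2$ (convex). Measurability of the random objects $E_{k,\omega}$, $P_{k,\omega}$, $M_{u,\omega}(\sigma,\zeta,\f,t)$, $\normsch{\chi_x\e^{-itH_\omega}\f(H_\omega)\chi_u}$, $\p_{u,\omega}(x,\f)$, $\lp_{u,\omega}(\sigma,\zeta,\f)$ (defined as in \eqref{P_u}--\eqref{L_u} with $H$ replaced by $H_\omega$) is routine, the suprema over $t$ and the $\liminf$ over $T$ being, by (semi)continuity, suprema/liminf over countable sets. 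Finally, $\z^d$-ergodicity (the covariance $H_{\tau_a\omega}=U_aH_\omega U_a^{-1}$) makes $\Es\{M_{u,\omega}(\sigma,\zeta,\f,t)\}$ independent of $u$ and $\Es\{\normsch{\chi_x\e^{-itH_\omega}\f(H_\omega)\chi_u}\}$ a function of $x-u$ only, so no factor $\e^{\eps\abs{u}^\zeta}$ survives and the constants in $(iv)$--$(v)$ are genuinely uniform.

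The implications $(iii)\Rightarrow(ii)\Rightarrow(i)$ follow, after taking expectations, from the same elementary inequalities as in the deterministic case (using $\e^{-t/T}\ge\e^{-1}$ on $[0,T]$). For $(iii)\Leftrightarrow(iv)$: the $\omega$-wise bound \eqref{lower bnd} gives $\sup_tM_{u,\omega}(\sigma,\zeta,\f,t)\ge C_{\sigma,\zeta}\,\e^{\sigma\abs{x-u}^\zeta}\sup_t\normsch{\chi_x\e^{-itH_\omega}\f(H_\omega)\chi_u}^2$, so a square root, Jensen, and an expectation yield $(iii)\Rightarrow(iv)$ with $\sigma$ halved. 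Conversely, the definition \eqref{moment} and subadditivity of $s\mapsto s^\zeta$ give the matching upper bound $M_{u,\omega}(\sigma',\zeta,\f,t)\le C\sum_x\e^{\sigma'\abs{x-u}^\zeta}\normsch{\chi_x\e^{-itH_\omega}\f(H_\omega)\chi_u}^2$; bounding one of the two factors in the square by the uniform a priori estimate $\normsch{\chi_x\e^{-itH_\omega}\f(H_\omega)\chi_u}\le\norm{\f}_\infty\normsch{\chi_x\chi_I(H_\omega)}\le C_\f$ (the same bound $\tr(\chi_x\chi_I(H_\omega)\chi_x)\le1$ used in \eqref{cnd2}), taking expectations, and choosing $\sigma'<\sigma$ so that $\sum_x\e^{-(\sigma-\sigma')\abs{x-u}^\zeta}<\infty$, gives $(iv)\Rightarrow(iii)$.

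For $(i)\Rightarrow(v)$, apply \eqref{getsulp} pointwise in $\omega$: $\p_{u,\omega}(x,\f)\le C_{\sigma,\zeta}\bigl(\liminf_{T\to\infty}\tfrac1T\int_0^TM_{u,\omega}(\sigma,\zeta,\f,t)\,\dd t\bigr)^{1/2}\e^{-\frac\sigma2\abs{x-u}^\zeta}$. Taking expectations, Jensen on the square root followed by Fatou ($\Es\liminf_T\le\liminf_T\Es$) and Tonelli ($\Es\tfrac1T\int_0^TM_{u,\omega}\,\dd t=\tfrac1T\int_0^T\Es M_{u,\omega}\,\dd t$) bound the $\liminf$ by $\sup_T\tfrac1T\int_0^T\Es M_{u,\omega}(\sigma,\zeta,\f,t)\,\dd t$, which is finite by $(i)$; since $\Es\{\p_{u,\omega}(x,\f)\}=\Es\{\sup_k\f(E_{k,\omega})\normsch{\chi_xP_{k,\omega}\chi_u}\}$, this is exactly $(v)$ with parameter $\sigma/2$.

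The \emph{delicate step}, which I expect to be the main obstacle, is $(v)\Rightarrow(iv)$: one cannot simply square the first-moment bound of $(v)$. The remedy is to exploit the free parameter in Theorem~\ref{kernel}: fix $\sigma'<\sigma$ and form $\lp_{u,\omega}(\sigma',\zeta,\f)=\sum_y\e^{\sigma'\abs{y-u}^\zeta}\p_{u,\omega}^2(y,\f)$. As $\p_{u,\omega}(y,\f)\le\norm{\f}_\infty\normsch{\chi_y\chi_I(H_\omega)}\le C_\f$, one has $\p_{u,\omega}^2\le C_\f\,\p_{u,\omega}$, whence, by $(v)$,
\[
\Es\{\lp_{u,\omega}(\sigma',\zeta,\f)\}\ \le\ C_\f\sum_y\e^{\sigma'\abs{y-u}^\zeta}\Es\{\p_{u,\omega}(y,\f)\}\ \le\ C_\f C\sum_y\e^{-(\sigma-\sigma')\abs{y-u}^\zeta}\ <\ \infty .
\]
Then \eqref{getsudl} applied with $\sigma'$, together with the trivial bound $\p_{u,\omega}(x,\f)\le\e^{-\frac{\sigma'}2\abs{x-u}^\zeta}\lp_{u,\omega}^{1/2}(\sigma',\zeta,\f)$, gives $\sup_t\normsch{\chi_x\e^{-itH_\omega}\f(H_\omega)\chi_u}\le C\,\e^{-\frac{(1-\gamma)\sigma'}{2}\abs{x-u}^\zeta}\lp_{u,\omega}^{1/2}(\sigma',\zeta,\f)$, and one last expectation with Jensen on the square root yields $(iv)$ with $\sigma$ replaced by $\tfrac{(1-\gamma)\sigma'}{2}$. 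Chaining $(iii)\Rightarrow(ii)\Rightarrow(i)\Rightarrow(v)\Rightarrow(iv)\Rightarrow(iii)$ closes the equivalence. The crux is keeping $\Es\{\lp_{u,\omega}\}$ finite — which forces a strictly smaller decay rate $\sigma'$ and the use of the uniform a priori bound on $\normsch{\chi_y\chi_I(H_\omega)}$ — while retaining enough spatial decay to feed into \eqref{getsudl}; everything else is the deterministic argument read through $\Es$.
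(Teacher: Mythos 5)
Your proof is correct and follows the same strategy as the paper: take the deterministic pointwise estimates \eqref{lower bnd}, \eqref{getsulp}, \eqref{getsudl} for each fixed $\omega$ and then pass to expectations via Tonelli, Fatou, Jensen, and H\"older, using covariance/ergodicity to make the bounds uniform in $u$. Where you go further than the paper is the step $(v)\Rightarrow(iv)$. The paper applies \eqref{getsudl} $\omega$-wise and then H\"older to obtain $\Es(\p_\omega(x,\f))^{1-\gamma}\,\Es(\lp_\omega)^{\gamma/2}$, but it leaves unsaid how the finiteness of $\Es(\lp_\omega)$ (a second-moment quantity $\sum_x\e^{\sigma|x|^\zeta}\Es(\p_\omega^2)$) follows from $(v)$ alone, which only controls the first moment $\Es(\p_\omega)$; as written, the paper in effect reads this finiteness off from $(i)$, so its proof establishes $(i)\Rightarrow(v)$ and $(i)\Rightarrow(iv)$ rather than genuinely $(v)\Rightarrow(iv)$. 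Your observation that the a priori bound $\p_\omega(x,\f)\le\norm{\f}_\infty\normsch{\chi_x\chi_I(H_\omega)}\le C_\f$ gives $\p_\omega^2\le C_\f\,\p_\omega$, allowing you to bound $\Es(\lp_\omega(\sigma',\zeta,\f))$ from $(v)$ at the cost of a strictly smaller $\sigma'<\sigma$, is exactly the ingredient needed to close the implication cleanly; the same device correctly handles $(iv)\Rightarrow(iii)$, which the paper tacitly takes over from the deterministic case even though the random version requires it. In short: same route, but you have filled in a real gap.
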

The proof is similar to that of Theorem~\ref{equiv1} and notice that the ergodicity allows us to study the dynamics just from the origin 
($u=0$) through the moments. Furthermore, we should take the randomness in account and add it in all other quantities that we have
introduced. 

\begin{proof} 
Once again, the points that we should prove are $(i)\Rightarrow(v)\Rightarrow (iv)$.
As in \eqref{P_u} and \eqref{L_u}, we introduce

\begin{equation}\label{P_omega}
\p_\omega(x,\f):= \sup_k \f(E_{k,\omega})\normsch{\chi_x P_{k,\omega} \chi_0},
\end{equation}

\begin{equation}\label{L_omega}
\lp_\omega(\sigma,\zeta,\f):=\sum_{x\in\z^d} \e^{\sigma\abs{x}^{\zeta}} \ \p_\omega ^2 (x,\f),
\end{equation}
and 
\begin{equation}\label{L_expect}
\lp(\sigma,\zeta,\f):=\sum_{x\in\z^d} \e^{\sigma\abs{x}^{\zeta}} \Es\left(\p_\omega ^2 (x,\f)\right).
\end{equation}
Using the same strategies, we have
\begin{align}\label{}
\liminf_{T\to\infty} \frac{1}{T}\int_0^T M_{0,\omega}(\sigma,\zeta,\f,T) 
& \geq C_{\sigma,\zeta} \sum_k \sum_{x\in\z^d}  \f^2(E_{\omega,k}) 
\e^{\sigma\abs{x}^{\zeta}} \normsch{\chi_x P_{\omega,k} \chi_0}^2 \notag\\
& \geq C_{\sigma,\zeta} \sum_{x\in\z^d}  \e^{\sigma\abs{x}^{\zeta}}\left(\sup_k \f^2(E_{\omega,k})\normsch{\chi_x P_{\omega,k} \chi_0}^2
\right)\notag.
\end{align}
Taking the expectation, we obtain
\begin{equation}\label{expectation}
\Es\left(\liminf_{T\to\infty} \frac{1}{T}\int_0^T M_{0,\omega}(\sigma,\zeta,\f,T)\right) 
\geq C_{\sigma,\zeta} \sum_{x\in\z^d}  \e^{\sigma|x|^{\zeta}} \Es\left(\sup_k \f(E_{\omega,k})\normsch{\chi_x P_{\omega,k} \chi_0}
\right)^2,\\
\end{equation}
and the Fatou lemma yields
\begin{equation}\label{sulp_exp}
 \liminf_{T\to\infty} \M_0(\sigma,\zeta,\f,T)\geq C_{\sigma,\zeta} \ \lp(\sigma,\zeta,\f).
\end{equation}
Consequently, we get a similar result to \eqref{getsulp}. For the last point, we go back to Theorem~\ref{kernel} and Lemma~\ref{order}
 that we restore for $\omega$ fixed. Then for any $\gamma\in(0,1)$, there exists a constant $C_{\sigma,\zeta,d,\gamma}$ which is 
uniform in $\omega$ such that 
 
\begin{equation}
 \sup_t \normsch{\chi_x \ \e^{-itH_\omega} \f(H_\omega) \chi_0}\leq C_{\sigma,\zeta,d,\gamma} \ 
\p_\omega^{1-\gamma}(x,\f) \lp_\omega^{\gamma/2}(\sigma,\zeta,\f),\notag
\end{equation}
and hence
\begin{equation}
 \Es\left(\sup_t \normsch{\chi_x \ \e^{-itH_\omega} \f(H_\omega) \chi_0}\right)\leq C_{\sigma,\zeta,d,\gamma} \ 
\Es\left(\p_\omega(x,\f)\right)^{1-\gamma} \Es(\lp(\sigma,\zeta,\f))^{\gamma/2},\notag
\end{equation}
thanks to the H\"{o}lder inequality that we applay with conjugate exponents $p=\frac{1}{1-\gamma}$ and $p'=1/\gamma$ 
and to Jensen's inequality.

\end{proof}

\end{document}